\renewcommand{\theequation}{\thesection.\arabic{equation}}
\newtheorem{theorem}{\bf Theorem}[section]
\newtheorem{definition}[theorem]{Definition}
\newtheorem{example}[theorem]{Example}
\newtheorem{lemma}[theorem]{Lemma}
\newtheorem{proposition}[theorem]{Proposition}
\newtheorem{remark}[theorem]{Remark}
\newcommand{\resethlcolor}{\sethlcolor{yellow}}
\newcommand{\resethlcolortwo}{\sethlcolor{Lavender}}
\begin{document}

\begin{center}
{\bf  R\'{e}nyi Divergence in General Hidden Markov Models}
\end{center} 
\vspace{0.3cm}
\begin{center}
 Cheng-Der Fuh \\
Fanhai International School of Finance, Fudan University \\
\vspace{0.1cm}
Su-Chi Fuh\\
Department of Computer Science and Information Engineering \\
National Taipei University of Technology \\
\vspace{0.1cm}
Yuan-Chen Liu \\
Department of Computer Science, National Taipei University of Education \\ 
\vspace{0.1cm}
Chuan-Ju Wang \\
Research Center for Information Technology Innovation, Academia Sinica  
\end{center}
\vspace{.5cm}

\centerline{\bf Abstract}
\vspace{0.2cm}
In this paper, we examine the existence of the R\'{e}nyi divergence between two
time invariant general hidden Markov models with arbitrary positive initial
distributions. By making use of a Markov chain representation of the
probability distribution for the general hidden Markov model and 
eigenvalue  for
the associated Markovian operator, we obtain, under some regularity conditions,
convergence of the R\'{e}nyi divergence. By using this device, we
also characterize the R\'{e}nyi divergence, and obtain the Kullback--Leibler
divergence as $\alpha \to 1$ of the R\'{e}nyi divergence. 
Several examples, including the classical finite state hidden Markov models,
Markov switching models, and recurrent neural networks, are given for
illustration. Moreover, we develop a non-Monte Carlo method that computes the
R\'{e}nyi divergence of two-state Markov switching models via the underlying
invariant probability measure, which is characterized by the Fredholm integral
equation. \\

\vspace{0.5cm}

\noindent
{\it Key words and phrases.}  Fredholm integration equation, Kullback--Leibler divergence,
Markov switching models, recurrent neural network.\\~\\

\def\theequation{1.\arabic{equation}}
\setcounter{equation}{0}

\section{Introduction}\label{sec1}

Motivated by study of the information divergence  in hidden Markov models
(HMM), Markov switching models, and recurrent neural networks, 
we here investigate R\'{e}nyi divergences in general hidden Markov models.
A general hidden Markov model is, loosely speaking, a sequence ${\bf Y}=\{Y_n,
n \geq 0\}$ of random variables obtained in the following way. First, a
realization of a finite state Markov chain ${\bf X}=\{X_n, n \geq 0 \}$ is created.
This chain is sometimes called the regime and is not observed. Then,
conditioned on ${\bf X}$, the ${\bf Y}$-variables are generated. Usually, the
dependency of $Y_n$ on ${\bf X}$ is more or less local, as when $Y_n =
g(X_n,\eta_n)$ or $Y_n = g(X_n, Y_{n-1}, \eta_n)$ for some function $g$ and
random sequence $\{\eta_n \}$, independent of ${\bf X}$. $Y_n$ itself is
generally not Markov and may in fact have a complicated dependency structure.

HMMs have been studied extensively, and have 
extraordinary applications in fields as varied as speech
recognition, cf.\ \cite{Rabiner_Juang_1993}, \cite{Rabiner_1989}; handwritten
recognition, cf.\ \cite{Hu_Brown_1996, Kunda_1989}; human activity recognition,
cf.\ \cite{HAR}; target detection and tracking, cf.\ \cite{Willett,Fusion2010,
WSEAS2014};  computational molecular biology and bioinformatics, including DNA
and protein modeling, cf.\ \cite{Churchill_1989}; modeling, rapid detection,
and tracking of malicious activity of terrorist groups, cf.\ \cite{RGT2013},
\cite{Raghavanetal-IEEE2014}; and others, with just a small sample of
references given for each application. A comprehensive survey of HMM research
and applications can be found in \cite{Ephraim_Merhav_2002},
\cite{Cappe_Moulines_Ryden_2005}, and \cite{Zucchini_MacDonald_2009}, including
an extensive bibliography. 

A natural extension of the celebrated HMM is the following Markov swiching models.
We start with a simple real-valued first-order autoregression around one of two
constants $\mu_1$ or $\mu_2$:
\begin{eqnarray}\label{msm1}
Y_n = \mu_{X_n} + \psi Y_{n-1}  + \varepsilon_n,
\end{eqnarray}
where $\varepsilon_n \sim N(0,\sigma^2)$, $|\psi| < 1$, and $\{X_n, n \geq 0\}$
is a $2$-state Markov chain. 
When $\psi=0$, (\ref{msm1}) reduces to the classical Gaussian HMM.

Another interesting example is the recurrent neural network (RNN) in machine
learning. Note that the RNN can take as input a variable-length sequence $y=(y_1,
\cdots,y_n)$ by recursively processing each symbol while maintaining its
internal hidden state $h$. At each time step $n$, the RNN reads the symbol $Y_n
\in {\bf R}^q$ and updates its hidden state $h_n \in {\bf R}^p$ by
\begin{eqnarray}\label{RNN11}
h_n = f_{\theta} (Y_n,h_{n-1}),
\end{eqnarray}
where $f_\theta$ is a deterministic non-linear transition function, and
$\theta$ is the parameter of $f_\theta$. 
The transition function $f_\theta$ can be implemented with gated activation
functions such as long short-term memory (LSTM), 
cf.\ \cite{Hochreiter_Schmidhuber_1997}, or gated recurrent unit (GRU), 
cf.\ \cite{cho_et_al_2014}.
Although the hidden unit $h_n$ is a general state random variable, a
transformation in Section~\ref{RNN} shows that the RNN in (\ref{RNN11}) can be
formulated as a general HMM.

The  R\'{e}nyi divergence rate, cf.\ \cite{Renyi_1960}, and the Kullback--Leibler
divergence in particular,  have played a significant role in certain
hypothesis-testing questions, cf.\ \cite{Koopmans_1960}, \cite{Nemetz_1974}.
Furthermore, the R\'{e}nyi entropy and the R\'{e}nyi entropy rate have revealed
several operational characterizations in the problem of fixed-length source
coding, cf.\ \cite{Csiszar_1995}, \cite{Chen_Alajaji_2001}; unsupervised
learning, cf.\  \cite{Jenssen_Hild_Erdogmus_Principe_Eltoft_2003};
variable-length source coding, cf.\ \cite{Blumer_McEliece_1988},
\cite{Campbell_1965}, \cite{Jelinek_1968}, and
\cite{Rached_Alajaji_Campbell_2001}; error exponent calculations, cf.\
\cite{Erez_Zamir_2001}; policy optimization in reinforcement learning, cf.\
\cite{metelli2018policy}, \cite{papini2019optimistic}, and other areas such as
\cite{Arikan_1996}, \cite{Bassat_Raviv_1978}, and
\cite{Pronzato_Wynn_Zhigljavsky_1997}.

It is known that the root of the Kullback--Leibler divergence and the R\'{e}nyi
divergence is the celebrated Shannon entropy. 
The question of computing the Shannon entropy (or, simply, entropy) of a HMM
was studied in an early paper by \cite{Blackwell_1957}, in which the analysis
suggests the intrinsic complexity of expressing the HMM entropy as a function
of the process parameters. The author also presents an expression of the
entropy in terms of a measure~$Q$, which solves an integral equation dependent
on the parameters of the process. In general, the measure is hard to extract
from the equation in any explicit way. 
\cite{Fuh_Mei_2015} provide a numerical method to approximate the invariant
measure and the Kullback--Leibler divergence for a two-state HMM. The problem of
determining the residual noise of the best filter for a HMM was studied in
\cite{Khasminskii_Zeitouni_1996}, 
\cite{Ordentlich_Weissman_2004},
\cite{Ordentlich_Weissman_2006}, and \cite{Jacquet_Seroussi_Szpankowski_2008}
investigate the asymptotic estimates of the HMM entropy rate.
Furthermore, \cite{Zuk_Kanter_Domany_2005} present formulas for higher-order
coefficients of the Taylor expansion in the symmetric case.
\cite{Han_Marcus_2006B} and \cite{Han_Marcus_2006A} characterize the
analyticity of the HMM entropy rate, and obtain a broad generalization of the
results of \cite{Zuk_Kanter_Domany_2005}. 

For an explicit computation of  the R\'{e}nyi entropy of HMMs over finite
alphabets in both finite-length and asymptotic regimes, \cite{Wu_Xu_Han_2017}
discuss some convergence properties with no explicit formulas.
For Shannon entropy the problem has been found hard and solvable only for
specific cases, being related to an intractable task in random matrix 
products---finding top Lapunov exponents, cf.\ \cite{Jacquet_Seroussi_Szpankowski_2008}. 
In the case of finite state Markov chains, \cite{Rached_Alajaji_Campbell_2001}
apply substochastic matrices in the asymptotic regime powers, which can be
approximated by spectral analysis, to yield formulas on entropy rates. 

Although there are some interesting papers on computing the R\'{e}nyi divergence
in a special HMM, systematic study for a general HMM is still lacking. To fill
this gap, we investigate the R\'{e}nyi divergence for a general HMM in this
paper. We make three contributions. First, we note that a major
difficulty for analyzing the R\'{e}nyi divergence in  general HMMs is that the
joint probability  can be expressed only in summation form; see equations
(\ref{joint}) and (\ref{density}) in Section~\ref{sec2} for instance. The
constribution in this paper is that we provide a device which represents the
joint probability as the $L_1$-norm of a product of random matrices and treat
it as a Markov chain in an enlarged state space. This representation enables us
to apply results of the strong law of large numbers, and spectral theory for
the associated Markovian operator  of Markov random walks, to yield an explicit
characterization of the R\'{e}nyi divergence, and hence the Kullback--Leibler
divergence. Second, our formulation of the HMM in a general sense covers
several interesting examples, including finite state HMMs, Markov
switching models, and RNNs. Third, we develop a non-Monte Carlo method that
computes the R\'{e}nyi divergence of a two-state Markov switching model via the
underlying invariant probability measure, which is characterized by the
Fredholm integral equation. For this purpose, we also provide an approximated
R\'{e}nyi divergence. Our numerical study shows that this approximated
R\'{e}nyi divergence is reasonably accurate in some simple cases.

The remainder of this paper is organized as follows. In Section~\ref{sec2}, we
define the HMM as a  Markov chain in a Markovian random environment, and
represent the probability as the $L_1$-norm of a product of Markovian random
matrices. Then, we give a brief summary of  eigenvalue and eigenfunction for Markovian
operators. In Section~\ref{sec4}, we study the limiting behavior of the probability and
characterize the R\'{e}nyi divergence. The Kullback--Leibler divergence is defined and can
be regarded as the limit of $\alpha \to 1$ in the R\'{e}nyi divergence. 
In Section~\ref{sec5}, we consider a few examples, including finite state Markov
switching models and RNNs, which are commonly used in machine learning. 
In Section~\ref{sec6}, we  give a numerical computation of the 
R\'{e}nyi divergence by applying the Fredholm integral equation for a two-state
Markov switching model. Section~\ref{sec7} concludes. \\

\def\theequation{2.\arabic{equation}}
\setcounter{equation}{0}
\section{Hidden Markov Models}\label{sec2}

\subsection{Hidden Markov Models}

In this section, we first provide a probability framework for a general HMM under
which it can be regarded as a Markov chain in an enlarged state space. That
is, there are two Markov chains associated with the general HMM to be described
as follows. First, a general HMM is defined as a parameterized Markov chain in
a Markovian random environment with the underlying
environmental Markov chain viewed as missing data. 
Specifically, let ${\bf X}= \{X_n, n \geq 0 \}$ be a Markov chain on a finite
state space ${\cal X}=\{1,\cdots,d\}$, with transition probability  $p_{ij}=
P\{X_1 =j|X_0=i\}$ for $i,j=1,\cdots,d$, and stationary probability $\pi_j$.
Suppose that a random sequence $\{Y_n\}_{n=0}^{\infty},$ taking values in ${\bf
R}^q$, is adjoined to the chain such that $\{(X_n,Y_n), n \geq 0\}$ is a
Markov chain on ${\cal X} \times {\bf R}^q$ satisfying $P \{ X_1 \in A |
X_0=i,Y_0=y \} = P \{ X_1 \in A | X_0=i \}$ for $A \in {\cal B}({\cal X})$, the
$\sigma$-algebra of ${\cal X}$. Conditioning on the full ${\bf X}$
sequence, $Y_n$ is a Markov chain with probability
\begin{eqnarray}\label{2.1}
 P \{Y_{n+1} \in  B | X_0,X_1,\cdots;Y_0,Y_1,\cdots, Y_n \} 
= P \{Y_{n+1} \in  B | X_{n+1},Y_n\}~~~a.s. 
\end{eqnarray}
for each $n$ and $B \in {\cal B}({\bf R}^q),$ the Borel $\sigma$-algebra of
${\bf R}^q$. Note that in (\ref{2.1}) the conditional probability of $Y_{n+1}$
depends on $X_{n+1}$ and $Y_n$ only.
Furthermore, we  assume the existence of the conditional probability density
$f(Y_k|X_k,Y_{k-1})$ of $Y_k$ given  $X_k$ and $Y_{k-1}$ with respect to a
$\sigma$-finite measure ${\cal L}$ on ${\bf R}^q$ such that
\begin{eqnarray}\label{2.2}
 P \{X_1 \in A, Y_{1} \in  B | X_0=i, Y_0 =y_0   \} 
= \sum_{j \in A} \int_{y \in B} p_{ij} f(y|j,y_0)  {\cal L}(dy). 
\end{eqnarray}
We also assume that the Markov chain $\{(X_n,Y_n), n \geq 0\}$ has a stationary
probability with probability density function $\pi_j f(\cdot|j)$ with respect
to ${\cal L}$.
Now we give a formal definition as follows.
\begin{definition}\label{def:hmm}
$\{Y_n,n \geq 0\}$ is called a general hidden Markov model if there is a Markov
chain $\{X_n,n \geq 0\}$ such that the process $\{(X_n,Y_n),n \geq 0\}$ is a
Markov chain satisfying {\rm (\ref{2.1})}. A non-invertible function $g(Y_n)$ of
$Y_n$ is also called a general hidden Markov model. 
\end{definition}

Note that this HMM setting is defined in a general sense, which includes
several interesting examples of Markov-switching Gaussian autoregression, 
cf.\ \cite{Hamilton_1989}, and RNNs in machine learning, 
cf.\ \cite{Goodfellow_Bengio_Courville_2016}. 
When $Y_n$ are {\it conditionally independent} given ${\bf X}$, denote $S_n
= \sum_{t=1}^n Y_t$. 
Then the Markov chain $\{(X_n,S_n),n \geq 0\}$ is called a {\it Markov additive
process}, cf.\  \cite{Ney_Nummelin_1987}, and $\{Y_n,n \geq 1 \}$ is the
celebrated HMM considered in engineering literature. 

Next we follow a similar idea in  \cite{Fuh_2004a} and
\cite{Fuh_Tartakovsky_2019}, to have a Markov chain representation of the
probability. Note that the joint probability of the general HMM $\{Y_n,n \geq
0\}$ is
\begin{eqnarray}\label{joint}
&~& P \{ Y_0 \in B_0, Y_1 \in B_1, \cdots, Y_n \in B_n\} \\
&=&  \int_{y_1 \in B_1} \cdots \int_{y_n \in B_n} p_n(y_1,\cdots,y_n)
{\cal L}(dy_n) \cdots {\cal L}(dy_1), \nonumber 
\end{eqnarray}
where 
\begin{eqnarray}\label{density}
 p_n(y_0,y_1,\cdots,y_n) 
 =\sum_{x_0=1}^d  \cdots \sum_{x_n=1}^d
 \nu_{x_0} f(y_0|x_0)\prod_{k=1}^n p_{x_{k-1}x_k} f(y_k| x_k,y_{k-1}), 
\end{eqnarray}
where  $\nu=(\nu_1,\cdots,\nu_d)^t$ is an initial distribution of $\{X_n, n
\geq 0\}$, which is positive $P$-a.s. Here $t$
denotes the transpose of the underlying vector in ${\bf R}^d$

For a given column vector $a=(a_1,\cdots,a_d)^t \in {\bf R}^d$, define the
$L_1$-norm of $a$ as $\| a \|  = \sum_{k=1}^d |a_k|$.
The probability (\ref{density}) can be represented as
\begin{eqnarray}\label{2.5}
p_n(y_0,y_1,\cdots,y_n) = \| M_n \cdots M_1 M_0 \nu \|,
\end{eqnarray}
where $y_0$ is given  from $f(\cdot|x_0)$ and
\begin{equation}\label{2.6m0}
M_0 = \left[ \begin{array}{cccc}
 f(y_0| x_0=1) & 0 & \cdots &  0 \\
0 & f(y_0| x_0=2) & \cdots  & 0 \\
\vdots & 0 & \ddots  & \vdots\\
0 & \cdots & 0 &  f(y_0| x_0=d)
\end{array} \right], 
\end{equation}
\begin{equation}\label{2.6}
 M_k = \left[ \begin{array}{ccc}
p_{11} f(y_k| x_k=1,y_{k-1}) & \cdots & p_{d1} f(y_k| x_k=1,y_{k-1}) \\
\vdots & \ddots  & \vdots\\
p_{1d}f(y_k| x_k=d,y_{k-1}) & \cdots & p_{dd} f(y_k| x_k=d,y_{k-1})
\end{array} \right], 
\end{equation}
for $k=1,\cdots,n.$

Note that, for $k=1,\cdots,n,$ the quantity $p_{ij}f(Y_k|X_k=j,Y_{k-1})$  in
(\ref{2.6})  represents $X_{k-1} =i$ and $X_k=j$, and $Y_k$ is a Markov chain
with transition probability density $f(y_k|x_k=j,y_{k-1})$ for given ${\bf X}$.
By definition (\ref{2.1}), $\{(X_n,Y_n), n \geq 0 \}$ is a Markov chain, which
implies that $M_k$ is a sequence of Markovian random matrices. Therefore,
by representation (\ref{2.5}), $p_n(Y_0,Y_1,\cdots,Y_n)$ is the $L_1$-norm of a
product of Markovian random matrices. Furthermore, let 
\begin{eqnarray}\label{mirfs}
T_n = M_n \cdots M_1 M_0.
\end{eqnarray} 

Denote $P({\bf R}^d)$ as the projection space on ${\bf R}^d$, $Gl(d,{\bf R})$
as the space of $d \times d$ matrices and $S^{d-1}$ as the $d$-dimensional
sphere.
For $\bar{u} \in P({\bf R}^d)$, $M \in Gl(d,{\bf R})$, let $M\cdot \overline{u}
= \overline{Mu}$, and $\nu=\nu(\theta)= (\nu_1,\cdots,\nu_d)^t \in S^{d-1}$,
the unit sphere with respect to the $L_1$-norm $\|\cdot\|$ in ${\bf R}^d$; we
have
\begin{eqnarray}\label{2.8}
\log \| T_n \nu \| = \log \frac{\| T_n \nu \|}{\| T_{n-1} \nu \|} +
\cdots+\log \frac{\| T_0 \nu \|}{\| \nu \|}.
\end{eqnarray}
Let $Z_n = (X_n,Y_n)$; define
\begin{eqnarray}\label{2.9}
W_0 = ( Z_0, \overline{T_0 \nu} ),~W_1 =(Z_1,\overline{T_1 \nu}), \cdots,
W_n = ( Z_n , \overline{T_n \nu}).
\end{eqnarray}
Then, $W_0,W_1,\cdots,W_n $ is a Markov chain on the state space ${\cal S} :=
{\cal X} \times {\bf R}^q \times P({\bf R}^d)$ with the transition kernel
\begin{eqnarray}\label{2.10}
{\mathbb P}((z,\bar{u}),A \times B)            
:= {\mathbb E}_z(I_{A \times B} (Z_1,\overline{M_1 u}))
\end{eqnarray}
for all $z \in {\cal X} \times {\bf R}^q,~\bar{u} \in P({\bf R}^d),~A \in {\cal
X} \times {\cal B}({\bf R}^q)$, and $B \in {\cal B}(P({\bf R}^d))$, the Borel
$\sigma$-algebra of $P({\bf R}^d)$.

Note that the initial distribution of $W_0$ depends on $Z_0$ only, and $Z_0$
has the distribution $\nu_i f(y|i,y_0)$ as its initial distribution. We note
that ${\mathbb P}_z:={\mathbb P} (\cdot,\cdot)$ in (\ref{2.10}) depends only on
$z$. Let ${\mathbb E}_z:={\mathbb E}_{(z,\bar{u})}$ denote the expectation
under ${\mathbb P}_z$. By (\ref{2.2}), the Markov chain $\{(X_n,Y_n), n \geq
0\}$ has transition density $p_{ij} f(y|j, y_0)$ with respect to $\cal L$.
Therefore, the induced transition probability ${\mathbb P}(\cdot,\cdot)$ has a
probability density $p(\cdot,\cdot)$ with respect to $\cal L$. Under 
condition~C in Section~\ref{sec3}, it follows from Proposition~2 of \cite{Fuh_2004c} or
Proposition 1 of \cite{Fuh_Tartakovsky_2019}  that the Markov   
chain $W_n$ has an invariant probability measure $\Pi$ on
${\cal S}$. Note that ${\cal L}$ is a product measure on ${\cal X} \times {\bf
R}^q \times P({\bf R}^d)$, and the first component has probability density
$\nu_i f(y| i, y_0)$ with respect to $\cal L$.
Now, for $M \in Gl(d,{\bf R})$, let $g : \cal S \times \cal S \rightarrow {\bf
R} $ be $g((z_0,\bar{u}),(z_1,\overline{M u}))= \log \frac{\| M u \|}{\| u
\|};$ then for $\nu$ defined in (\ref{density}),
\begin{eqnarray}\label{2.11}
\log\|T_n \nu \| = g(W_{n-1},W_n)+\cdots+g(W_0,W_1) + g(W_0,W_0)
\end{eqnarray}
is an additive functional of the Markov chain $\{W_n, n \geq 0 \}$,
where $g(W_0,W_0)= \log \frac{\| T_0 \nu \|}{\| \nu \|}.$

\subsection{Nonnegative transition probability  kernel for Markov operator}\label{sec3}

To study the  Kullback--Leibler divergence and R\'{e}yni divergence in  general
HMMs, we must consider a nonnegative transition probability kernel for a
Markov operator of the  induced  Markov chain $\{W_n, n \geq 0\}$ defined in
(\ref{2.9})  on the state space ${\cal S} := {\cal X} \times {\bf R}^q \times
P({\bf R}^d)$, with the transition kernel ${\mathbb P}$ in (\ref{2.10}). Before
that we need the following notation.

Note that $\{(X_n,Y_n),n \geq 0\}$ defined in (\ref{2.1}) and (\ref{2.2}) is a
Markov chain on the state space $ {\cal X} \times {\bf R}^q$. Below,
we abuse the notation a bit to consider  $\{Y_n,n \geq 0\}$ as a Markov chain
on a general state space ${\bf R}^q$. 

\begin{definition}
A Markov chain $\{Y_n,n \ge 0\}$ on a general state space 
${\bf R}^q$  is said to be $V$-uniformly ergodic if there exists a measurable function
$V: {\bf R}^q \rightarrow [1,\infty)$, with $\int V(y) {\cal L}(dy) < \infty$, such that 
\begin{eqnarray}\label{3.11}
~~~ \lim_{n \rightarrow \infty} \sup_{y \in {\bf R}^q} \bigg\{\frac{\big| E[h(Y_n)|Y_0=y]
	- \int h(z){\cal L}(dz)\big|}{V(y)}: |h| \le V \bigg\} =0.
\end{eqnarray}
\end{definition}
\begin{definition}
A Markov chain $\{Y_n,n \geq 0\}$ on a state space ${\bf R}^q$ is said to be  {\it
Harris recurrent}  if there exists a recurrent set ${\cal R} \in {\cal B}({\bf
R}^q)$, a probability measure $\varphi$ on ${\cal R}$, a $\lambda > 0$, and an
integer $n_0$ such that
\begin{eqnarray}\label{mina}
  &~& P\{ Y_n \in {\cal R}~\text{for some}~n \geq 1|Y_0=y\} =1,\\
&~& P\{ Y_{n_0} \in A |Y_0=y \} \geq \lambda \varphi(A), \nonumber
\end{eqnarray}
for all $y \in {\cal R}$ and $A \subset {\cal R}$.
\end{definition}

It is known that under the irreducibility and aperiodicity assumption, $V$-uniform
ergodicity implies that $\{X_n,n \geq 0\}$ is Harris recurrent, 
cf.\ Theorem~9.18 of  \cite{Meyn_Tweedie_2009}.

The following assumptions will be used throughout this paper.

\noindent
Condition C:

\noindent
C1. The Markov chain $\{(X_n,Y_n), n \geq 0 \}$ defined in (\ref{2.1}) and
(\ref{2.2}) is aperiodic and irreducible on ${\cal X} \times {\bf R}^q$. For
each $j \in {\cal X}$, the conditional Markov chain $\{Y_n|X_n, n \geq 0 \}$ is
$V_j(\cdot)$-uniformly ergodic for some $V_j(\cdot)$ on ${\bf R}^q$, such that
there exists $p \geq 1$,
\begin{eqnarray} \label{2.15}
  \sup_{y \in {\cal R}^q } E^{\theta}_{y}\bigg\{\frac{V_j(Y_p)}{V_j(y)} \bigg\} < \infty~\text{for all}~j \in {\cal X}.
\end{eqnarray}

\noindent
C2. Assume 
$0< \sup_{j \in {\cal X}} f(y|j, y_0)< \infty,$ for all $y \in {\bf R}^q$.
Denote $h(Y_1)= \max_{i \in {\cal X}} \sup_{y_{0} \in {\bf R}^q} \\
\sum_{j=1}^d p_{ij} f(Y_1|j,y_0) $. Assume there exists $p \geq 1$ as in C1
such that for all $i \in {\cal X}$,
\begin{eqnarray}
&~& \sup_{j \in {\cal X}, y \in {\bf R}^q} E^{\theta}_{i} \bigg\{ \log \bigg( h(Y_1)^p
\frac{V_j(Y_p)}{V_j(y)}  \bigg) \bigg\}  < 0 \label{2.16},  \\
&~& \sup_{j \in {\cal X}, y \in {\bf R}^q} E^{\theta}_{i} \bigg\{ h(Y_1)
\frac{V_j(Y_1)}{V_j(y)} \bigg\}  < \infty. \label{2.17}
\end{eqnarray}

\noindent
C3. 
Recall that ${\cal L}$ is a
$\sigma$-finite measure on ${\bf R}^q$ defined in (\ref{2.2}). 
Assume
\begin{eqnarray}
\max_{i \in {\cal X}}\sup_{y_0 \in {\bf R}^q } | \sum_{j \in {\cal X}}
\int_{y \in {\bf R}^q} \pi_i  p_{ij} f(y|j,y_0) {\cal L}(dy) |   < \infty. \nonumber
\end{eqnarray}

\begin{remark}\label{rm6}
	 C1 is an ergodic condition for the underlying Markov chain.
The weighted mean contraction property (\ref{2.16}) and the finite weighted
mean average property (\ref{2.17}), which appear in C2, guarantee that the induced
Markov chain $\{W_n, n \geq 0\}$ is $\tilde{V}$-uniformly ergodic for a given function $\tilde{V}$, and hence to be Harris recurrent. In Section~\ref{sec5}, we show that several interesting models satisfy these conditions. 
C3 is a constraint of the R\'{e}nyi divergence (Kullback--Leibler divergence)
and is a standard moment condition.
The finiteness condition is quite natural and holds in most cases.
\end{remark}

The following proposition  is a generalization of Theorem~3 in
\cite{Fuh_2021a}. Since the proof is the same as those in Lemmas~3 and~4 of
\cite{Fuh_2006}, it is omitted.
\begin{proposition}\label{pro1}
	Let $\{(X_n,Y_n), n \geq 0\}$ be the hidden Markov model given in {\rm
	(\ref{2.1})} and {\rm (\ref{2.2})}, satisfying {\rm C1--C3}. 
	Then the induced Markov chain $\{W_n,n \geq 0\}$ is an aperiodic,
	irreducible, and Harris recurrent Markov chain, with the invariant
	probability $\Pi$.	Furthermore there exist $a,C > 0,$ such that ${\mathbb
	E}_{w}(\exp\{a g(W_0,W_1)\}) \leq C < \infty$ for all $w \in {\cal W}.$
\end{proposition}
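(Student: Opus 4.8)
The plan is to treat the two assertions separately: first the qualitative ergodic structure of $\{W_n\}$, then the quantitative exponential-moment bound. The simplification I would exploit throughout is that the projective coordinate $\overline{T_n \nu}$ is a deterministic function of the path of the underlying chain $\{(X_n,Y_n)\}$ through the product $T_n = M_n \cdots M_0$, so that all randomness of $W_n$ is carried by $\{(X_n,Y_n)\}$, whose ergodic behaviour is already controlled by C1. Thus $\{W_n\}$ is in effect a skew-product over $\{(X_n,Y_n)\}$, and the task is to transfer irreducibility, aperiodicity and Harris recurrence up to the enlarged space ${\cal S} = {\cal X} \times {\bf R}^q \times P({\bf R}^d)$, exploiting the compactness of $P({\bf R}^d)$.

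For the first assertion I would run a drift-plus-minorization argument in the sense of \cite{Meyn_Tweedie_2009}. Irreducibility and aperiodicity of $\{W_n\}$ follow from those of $\{(X_n,Y_n)\}$ (C1) together with the contraction of the projective map $\bar u \mapsto \overline{M u}$: since the $M_k$ are nonnegative and, by the irreducibility in C1, their products are eventually strictly positive, the action contracts the Hilbert projective metric on the compact space $P({\bf R}^d)$, so the $\bar u$-coordinate is asymptotically forgotten and reachability on ${\cal X}\times{\bf R}^q$ lifts to reachability on ${\cal S}$. For Harris recurrence and the existence of $\Pi$ I would build a Lyapunov function $\tilde V$ on ${\cal S}$ from the weights of C1, for instance $\tilde V(x,y,\bar u) = V_x(y)$ with the projective factor contributing only a bounded multiplicative constant, and verify a geometric drift ${\mathbb E}_w[\tilde V(W_p)] \le \rho\, \tilde V(w) + b\, I_C(w)$ with $\rho<1$: the contraction factor $\rho$ comes from the weighted-mean contraction (\ref{2.16}), the finiteness of the drift from (\ref{2.17}), and the small set $C$ together with its minorization from the Harris recurrence of $\{(X_n,Y_n)\}$. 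This is the conclusion already recorded via Proposition~2 of \cite{Fuh_2004c} and Proposition~1 of \cite{Fuh_Tartakovsky_2019}, which I would cite for $\Pi$ and then upgrade to $\tilde V$-uniform (hence geometric) ergodicity.

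For the exponential moment I would first record a pathwise bound on the increment. Since the entries of $M_1$ are the nonnegative quantities $p_{ij}f(Y_1|j,y_0)$ and $u$ may be taken with nonnegative entries summing to $\|u\|$, the $L_1$-norm factorises as $\|M_1 u\| = \sum_i u_i \sum_j p_{ij} f(Y_1|j,y_0)$, so the convex-combination estimate gives $\|M_1 u\|/\|u\| \le \max_i \sum_j p_{ij} f(Y_1|j,y_0) \le h(Y_1)$. Hence $g(W_0,W_1) = \log(\|M_1 u\|/\|u\|) \le \log h(Y_1)$ and, for $a>0$, $\exp\{a\, g(W_0,W_1)\} \le h(Y_1)^a$. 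It then suffices to show $\sup_{w\in{\cal W}} {\mathbb E}_w[h(Y_1)^a] < \infty$ for some small $a>0$; this I would extract from C2, using (\ref{2.17}) for the weighted first moment of $h(Y_1)$ and the $V_j$-weights to absorb the dependence on the initial state (for $a\le 1$ one bounds $h(Y_1)^a \le 1 + h(Y_1)$ and invokes the uniform drift just established). Note that the supremum over $y_0$ in the definition of $h$ typically flattens the density to its peak value — in the Gaussian autoregressive model $h$ is in fact constant — so the required moment is often trivially finite.

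The main obstacle I anticipate is not either bound in isolation but the coupling between the projective coordinate and the $Y$-coordinate in the drift inequality on ${\cal S}$: one must check that the contraction of the $\bar u$-dynamics and the geometric return of the $Y$-dynamics combine into a single Lyapunov inequality with a common small set, and that the minorization holds on the full product space rather than merely on ${\cal X}\times{\bf R}^q$. This is precisely the content of Lemmas~3 and~4 of \cite{Fuh_2006}, on which the stated proof relies; the verification is routine once the weighted contraction (\ref{2.16}) and the moment condition (\ref{2.17}) are in hand, but it is where the real work lies.
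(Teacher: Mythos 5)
Your proposal follows essentially the same route as the paper, which in fact omits the proof entirely and defers to Lemmas~3 and~4 of \cite{Fuh_2006} (and Proposition~2 of \cite{Fuh_2004c}, Proposition~1 of \cite{Fuh_Tartakovsky_2019}) for exactly the drift-plus-minorization and projective-contraction argument you outline; your key computation $\|M_1 u\|/\|u\| = \sum_i (u_i/\|u\|)\sum_j p_{ij} f(Y_1|j,y_0) \le h(Y_1)$ is precisely the bound that drives the exponential-moment claim. The one point to watch is that the uniformity in $w$ of ${\mathbb E}_w[h(Y_1)^a]$ needs $h$ (equivalently the densities $f(\cdot|j,y_0)$) to be uniformly bounded rather than merely finite for each $y$ as the first display of C2 literally states --- but this holds in every example the paper treats and is clearly the intended reading, so your argument stands.
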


Under the Harris recurrent condition (\ref{mina}), it is known that, 
cf.\ \cite{Meyn_Tweedie_2009}, $W_n$ admits a regenerative scheme with 
i.i.d.\ inter-regeneration times for an augmented Markov chain, which is called the
``split chain''. Heuristic speaking,  let $\tau_{\Delta}(0) = \tau_{\Delta}$,
and let $\{\tau_{\Delta}(j), j \geq 1\}$ denote the times of consecutive visits
to a recurrent state $\Delta \in {\cal S}$. For a function $f: {\cal S} \to
{\bf R}$, let $S_n =\sum_{i=0}^n f(W_i)$, $S_j(f) =
\sum_{i=\tau_{\Delta}(j)+1}^{\tau_{\Delta}(j+1)} f(W_i)$. 
By the strong Markov property, the random variables $\{S_j(f), j \geq 0\}$ are
independent and identically distributed random variables.  Note that here we
only consider $f(W_i)$; the case of $f(W_i, W_{i+1})$ 
is similar.

Let $\tau = \tau_{\Delta}$ be the first time $(>0)$ reaches the recurrent
state $\Delta$ of the split chain.
Let $\nu$ be an initial distribution on ${\cal S}$, and define
\begin{eqnarray}\label{eigenvalue}
  u(\vartheta,\zeta)= E_{\nu} e^{\vartheta S_\tau -\zeta \tau}~\text{for}~ \zeta \in {\bf R}.
\end{eqnarray} 
Assume that
\begin{eqnarray}\label{mom}
  \Gamma:= \{(\vartheta,\zeta):u(\vartheta,\zeta) < \infty\}~\text{is an open subset on}~{\bf R}^2.
\end{eqnarray}
Denote $\zeta_1:=S_1.$
\cite{Ney_Nummelin_1987} shows that ${\cal D}=\{\vartheta: u(\vartheta,\zeta) <
  \infty~\text{for some}~\zeta\}$
is an open set and that for $\vartheta \in {\cal
D}$, the transition kernel 
\begin{eqnarray}\label{operator}
\hat{\bf P}_{\vartheta}(w,A) = {\mathbb E}_{w}\{ e^{\vartheta \zeta_1} I_{\{W_1 \in A \}}\}
\end{eqnarray} 
has a unique maximal simple real eigenvalue $e^{\Lambda(\vartheta)}$, where
$\Lambda(\vartheta)$ is the unique solution of the equation
$u(\vartheta,\Lambda(\vartheta))=1$, with corresponding right eigenfunctions
$r(\cdot;\vartheta)$ and left eigenmeasures ${\it l}_\nu(\cdot;\vartheta)$
defined by
\begin{eqnarray}\label{eigenfun}
r(w;\vartheta):= {\mathbb E}_{w} \exp\{\vartheta S_\tau - \tau \Lambda(\vartheta)  \}.
\end{eqnarray}
For a measurable subset $A \in {\cal B}({\cal S})$, any initial distribution
$\nu$ on ${\cal S}$ and $w \in {\cal S}$, define
\begin{eqnarray}
{\it l}_\nu(A ;\vartheta) &=&  {\mathbb E}_{\nu}\bigg[\sum_{n=0}^{\tau -1} e^{\vartheta S_n- n \Lambda(\vartheta)}
I_{\{W_n \in A \}}\bigg],  \label{eigenmea1}\\
{\it l}_{w}(A;\vartheta) &=&  {\mathbb E}_{w}\bigg[\sum_{n=0}^{\tau -1} e^{\vartheta S_n- n \Lambda(\vartheta)}
I_{\{W_n \in A \}}\bigg]. \label{eigenmea2}
\end{eqnarray}

To analyze $\hat{{\bf P}}_{\vartheta}$ in (\ref{operator}), for completeness,
we state the following proposition, which is taken from Theorem~4.1 in
\cite{Ney_Nummelin_1987}.
Note that by Proposition~\ref{pro1}, the induced Markov chain $\{W_n,n \geq
0\}$ is an aperiodic, irreducible, and Harris recurrent Markov chain, which
implies that condition~M1 in Theorem~4.1 of \cite{Ney_Nummelin_1987} holds.

\begin{proposition}\label{pro2}
	Let $\{(X_n,Y_n), n \geq 0\}$ be the hidden Markov model given in {\rm
	(\ref{2.1})} and {\rm (\ref{2.2})}, satisfying {\rm C1--C3}. Let $\hat{{\bf
	P}}_{\vartheta} (\cdot, \cdot)$ be the operator  defined in {\rm
	(\ref{operator})}, and $\Lambda(\cdot)$ be defined by the characteristic
	equation {\rm (\ref{eigenvalue})}. Then
	
	(i) ${\cal D}=\{\vartheta: u(\vartheta,\zeta) < \infty~{\rm
	for~some}~\zeta\}$ is an open set. $\Lambda$ is analytic, strictly convex,
	and essentially smooth on ${\cal D}$.
	
	(ii) For $\vartheta \in {\cal D}$, $\lambda(\cdot)=e^{\Lambda(\cdot)}$ is
	the largest eigenvalue of $\hat{\bf P}_\vartheta$ with (right) eigenfunction
	$\{r(w;\vartheta): w \in {\cal S}\}$ and (left) eigenmeasure $\{{\it l}_\nu
	(A; \vartheta): A \in {\cal S}\}$ having the representation {\rm
	(\ref{eigenfun})} and  {\rm (\ref{eigenmea1})}.
	
	(iii) There is a set $B \subset {\cal S}$ with $\varphi(B^c)=0$, such that
	for each $w \in B$, $0 < r(w;\cdot) < \infty$ and is analytic on ${\cal D}$.
	If $B$ is a small set, then
	$0< {\it l}_\nu (B; \vartheta) < \infty$ and $0 < {\it l}_w (B; \vartheta) < \infty$ for all $w \in {\cal S}$ and is analytic on ${\cal D}$.
	
	(iv)  There exists a partition ${\cal S} = \cup_{i=1}^\infty {\cal S}_i$ and a sequence of functions
	$f_i:{\bf R} \to (0,\infty), i =1,2,\cdots,$ such that
	\[ r(w;\vartheta) \geq f_i(\vartheta)I_{{\cal S}_i}(w), ~~~w \in {\cal S}, \vartheta \in {\cal D}, i=1,2,\cdots.\]
\end{proposition}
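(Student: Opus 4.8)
The plan is to obtain Proposition~\ref{pro2} as a direct application of Theorem~4.1 of \cite{Ney_Nummelin_1987} to the induced chain $\{W_n, n \geq 0\}$, so the real work is to verify that the hypotheses of that theorem hold in the present setting and then to transcribe its conclusions into statements (i)--(iv). First I would invoke Proposition~\ref{pro1}: since $\{W_n\}$ is aperiodic, irreducible, and Harris recurrent with invariant probability $\Pi$, the Harris recurrent condition (\ref{mina}) supplies the split-chain regeneration scheme described after Proposition~\ref{pro1}, with i.i.d.\ inter-regeneration cycles. This is precisely condition M1 of \cite{Ney_Nummelin_1987}, and it also makes the block sums $S_j$ over successive cycles i.i.d., so that the one-cycle transform $u(\vartheta,\zeta)=E_{\nu} e^{\vartheta S_\tau - \zeta \tau}$ in (\ref{eigenvalue}) becomes the basic building block for everything that follows.

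The second step, which I expect to be the main obstacle, is to establish that $\Gamma$ in (\ref{mom}) is open and nonempty, and thereby that ${\cal D}$ is a nonempty open set. Openness of ${\cal D}=\{\vartheta: u(\vartheta,\zeta)<\infty \text{ for some }\zeta\}$ then follows because it is the coordinate projection of the open set $\Gamma$, and projection is an open map. The delicate point is the joint exponential integrability encoded in $u(\vartheta,\zeta)$: one must control $e^{\vartheta S_\tau}$ and $e^{-\zeta\tau}$ simultaneously over a regeneration cycle. Here I would combine the exponential moment bound ${\mathbb E}_{w}(\exp\{a\, g(W_0,W_1)\}) \le C$ furnished by Proposition~\ref{pro1} with the geometric tail of the regeneration time $\tau$ that Harris recurrence together with the $\tilde{V}$-uniform ergodicity of $\{W_n\}$ (Remark~\ref{rm6}) provide; balancing the tilt $\vartheta$ on the additive part $S_\tau$ against the cycle-length penalty $\zeta\tau$ shows that $u$ is finite on a two-dimensional neighborhood, so $\Gamma$ is open. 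Given finiteness, analyticity of $u$ in $(\vartheta,\zeta)$ on $\Gamma$ follows by dominated convergence and Morera's theorem.

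With the hypotheses in place, parts (i)--(iv) are read off from Theorem~4.1 of \cite{Ney_Nummelin_1987}. For (i), since $u$ is strictly decreasing in $\zeta$ with $\partial u/\partial\zeta<0$, the implicit function theorem applied to $u(\vartheta,\Lambda(\vartheta))=1$ yields that $\Lambda$ is well defined and analytic on ${\cal D}$; strict convexity of $\Lambda$ comes from a H\"older/Cauchy--Schwarz inequality for the cumulant transform of the i.i.d.\ cycle increments, and essential smoothness (steepness at $\partial{\cal D}$) from the blow-up of $u$ at the boundary of $\Gamma$. For (ii), I would verify the right-eigenfunction identity by a one-step/regeneration decomposition: conditioning on $W_1$ and using the strong Markov property at $\tau$ gives $r(w;\vartheta)={\mathbb E}_{w}[e^{\vartheta \zeta_1 - \Lambda(\vartheta)}\, r(W_1;\vartheta)]$, that is $\hat{\bf P}_{\vartheta} r = e^{\Lambda(\vartheta)} r = \lambda(\vartheta) r$, while the left-eigenmeasure identity ${\it l}_\nu \hat{\bf P}_{\vartheta} = \lambda(\vartheta)\, {\it l}_\nu$ follows from the dual last-exit decomposition of (\ref{eigenmea1}), with maximality and simplicity of $\lambda(\vartheta)$ coming from the Perron--Frobenius argument of \cite{Ney_Nummelin_1987}. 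Finally, (iii) and (iv) follow from the positivity, finiteness, and analyticity of $r(w;\cdot)$ on the $\varphi$-full set $B$, together with the minorization in (\ref{mina}): partitioning ${\cal S}$ into small sets $\{{\cal S}_i\}$ and bounding $r(w;\vartheta)$ below on each $\mathcal{S}_i$ by restricting the expectation in (\ref{eigenfun}) to the event of immediate regeneration produces the functions $f_i$ required in (iv).
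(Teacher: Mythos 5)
Your proposal takes essentially the same route as the paper: the paper offers no independent proof of this proposition, stating only that it is taken from Theorem~4.1 of \cite{Ney_Nummelin_1987} and that the sole hypothesis to check, condition M1, follows from the aperiodicity, irreducibility, and Harris recurrence of $\{W_n\}$ established in Proposition~\ref{pro1}. Your additional sketch of the internals (openness of $\Gamma$, the implicit-function argument for $\Lambda$, the regeneration decomposition for the eigenfunction identities) is a reasonable reconstruction of the Ney--Nummelin machinery, though the paper itself simply assumes the openness of $\Gamma$ in (\ref{mom}) rather than deriving it.
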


\begin{remark}
	We will use Proposition~\ref{pro2} (i)--(iv) in Section~\ref{sec4} and
	the rest of this paper,  the reader is referred to \cite{Ney_Nummelin_1987} Theorem 4.1 and Lemma 4.5 for details. (iv) states that there is a countable partition of the state space 
	${\cal S} = \cup_{i=1}^\infty {\cal S}_i$, independent of $\vartheta$, such that $r(w;\vartheta)$
	is uniformly positive on each ${\cal S}_i$.
	However in order to apply  (iii), one needs to extend to
	$0< {\it l}_\nu ({\cal S}; \vartheta) < \infty$ and the uniform boundness of $r(w;\cdot)$ over the whole space ${\cal S}$. To this end, one needs extra condition and apply
    Theorem~4 of \cite{Chan_Lai_2003} under this additional assumption. 
    For completeness, we inculde it as follows.
\end{remark}

Note that $\{W_n, n \geq 0\}$ is $\tilde{V}$-uniformly ergodic as stated in Remark \ref{rm6}.

\noindent
C4. Assume \eqref{mom} hold. Let $C$ be a measurable subset of ${\cal S} $ such that for any given initial distribution $\nu$ on ${\cal S}$,
\begin{eqnarray}\label{eigenpro}
{\cal L}_\nu(C;\vartheta) < \infty~{\rm and}~{\cal L}_{w}(C;\vartheta)< \infty~{\rm for~all~}
w \in {\cal S}.
\end{eqnarray}
Let $\tilde{V}: {\cal S} \to [1,\infty)$ be a measurable function.  Assume for some $0 < \beta < 1$ and $K > 0$,  we have
\begin{eqnarray}
&& E_{w} [e^{\vartheta W_1- \Lambda(\vartheta)} \tilde{V}(W_1)] \leq (1 - \beta) \tilde{V}(w)~\forall~w \notin C, \label{eignmeapro1} \\
&& \sup_{w \in C} {\mathbb E}_{w} [e^{\vartheta W_1- \Lambda(\vartheta)} \tilde{V}(W_1)] = K < \infty~{\rm and}~\int \tilde{V}(w)\varphi(dw) < \infty, ~\label{eignmeapro2}
\end{eqnarray}
where $\varphi$ is defined in \eqref{mina}.
\begin{remark}
Note that under condition C1--C4, we have $0< {\it l}_\nu ({\cal S}; \vartheta) < \infty$ and $r(w;\cdot)$ is uniform boundness over the whole space ${\cal S}$. Althought condition C4
is under the induced Markov chain $\{W_n, n \geq 0\}$, by using the results in \cite{Fuh_2021b},
this condition holds for some interesting examples, see Section \ref{sec5}. Moreover, if  the state space is finite (compact), which is commonely used in engineering, the above results hold.
\end{remark}

\section{ R\'{e}nyi divergence}\label{sec4}
\def\theequation{3.\arabic{equation}}
\setcounter{equation}{0}

We state our main results in this section. Section~\ref{sec4.1} presents the
convergence of the R\'{e}nyi divergence. 
Section~\ref{sec4.2} defines the Kullback--Leibler divergence, and shows that
the Kullback--Leibler divergence  is the  limit of the R\'{e}nyi divergence as
$\alpha \to 1$.

\subsection{R\'{e}nyi divergence}\label{sec4.1}

Let $\{Y_n, n \geq 0\}$ be the general HMM defined in (\ref{2.1}). Denote
$Y_{0:n}=\{Y_0,Y_1,\cdots,Y_n\}$ and $y_{0:n}=\{y_0,y_1,\cdots,y_n\}$. With the
same notation used in Section~\ref{sec2}, denote $P^{(n)}(\cdot)$ and
$Q^{(n)}(\cdot)$ as two probabilities on  $Y_{0:n}$. By (\ref{density}), the
probability density functions $p^{(n)}(\cdot)$ and $q^{(n)}(\cdot)$ of the
random variables $\{Y_0,Y_1,\cdots,Y_n \}$  under $P^{(n)}$ and $Q^{(n)}$ are
given, respectively, by 
\begin{eqnarray}\label{densityP}
p^{(n)} (y_{0:n}) &=& p_n(y_0, y_1,\cdots,y_n) \\
&=& \sum_{x_0=1}^d  \cdots \sum_{x_n=1}^d
\nu_{x_0} f(y_0|x_0)\prod_{k=1}^n p_{x_{k-1}x_k} f(y_k;\theta| x_k,y_{k-1}), \nonumber 
\end{eqnarray}
\begin{eqnarray}\label{densityQ}
q^{(n)} (y_{0:n}) &=& q_n(y_0,y_1,\cdots,y_n) \\
&=& \sum_{x_0=1}^d  \cdots \sum_{x_n=1}^d
\nu_{x_0} g(y_0|x_0)\prod_{k=1}^n p_{x_{k-1}x_k} g(y_k;\theta| x_k,y_{k-1}), \nonumber 
\end{eqnarray}
where $g(y_k| x_k,y_{k-1})$ is the probability density of $Q$ with respect to
${\cal L}$.

Recall the definition of the R\'{e}yni Divergence for independent and
identically distributed random variables (i.i.d.)
$\{\xi_n, n \geq 0\}$ as follows: for given $\alpha \in (0,1) \cup (1,\infty)$, let
\begin{eqnarray}
D_\alpha(f||g) = \frac{1}{\alpha - 1} \log E_f \bigg[ \bigg(\frac{f(\xi_1)}{g(\xi_1)}\bigg)^{\alpha-1} \bigg].
\end{eqnarray}

Now for given a hidden Markov model $\{Y_n, n \geq 0\}$ with transition
probability density $p$ and $q$, let
\begin{eqnarray}\label{renyi}
D^n_\alpha(p^{(n)}||q^{(n)}) &=& \frac{1}{\alpha - 1} \log E_p \bigg[ \bigg(\frac{ p_n(Y_0, Y_1,\cdots,Y_n)}
{ q_n(Y_0, Y_1,\cdots,Y_n)}\bigg)^{\alpha-1} \bigg].
\end{eqnarray}
Note that here $E_f$ ($E_p$) denotes the expectation under probability
distribution $f$ ($p$).
We will use the same type of notation without specification here and afterward.

By (\ref{2.5}) and (\ref{2.8}), we have
\begin{eqnarray}\label{renyimatrix}
(\ref{renyi}) = \frac{1}{\alpha - 1} \log {\mathbb E}_p \bigg[ \bigg(\frac{ \| M^p_n \cdots M^p_1 M^p_0 \nu_p \|}
{ \| M^q_n \cdots M^q_1 M^q_0 \nu_q \|}\bigg)^{\alpha-1} \bigg],
\end{eqnarray}
where $M^p_k$ is defined in (\ref{2.6m0}) and (\ref{2.6}) under the probability
from $p^{(n)}$, and $M_k^q$ is defined in (\ref{2.6m0}) and (\ref{2.6}) under
the probability from $q^{(n)}$, for $k=0,1,\cdots,n$. 
Here $\mathbb E$ is defined as the expectation under the probability $\mathbb P$
defined in (\ref{2.10}).

Denote $g(W_{0},W_0) = \log \frac{\|T_0^p \nu_p \|/\|\nu_p\|} {\|T_0^q \nu_q
\|/\|\nu_q\|}$, and $g(W_{k-1},W_k) = \log \frac{\|T_ k^p \nu_p \|/\|T_{k-1}^p
\nu_p\|}{\|T_ k^q \nu_q \|/\|T_{k-1}^q \nu_q\|}$,  for $k=1,\cdots,n$. Let
$S_0= g(W_0,W_0)$ and  $S_n = S_0 +  \sum_{k=1}^n g(W_{k-1},W_k)$. Then by
(\ref{2.11}), we have
\begin{eqnarray}\label{h(W)}
&~& {\mathbb E}_p \bigg[ \bigg(\frac{ \| M^p_n \cdots M^p_1 M^p_0 \nu_p \|}
{ \| M^q_n \cdots M^q_1 M^q_0 \nu_q \|}\bigg)^{\alpha-1} \bigg] \\
&=& {\mathbb E}_p \bigg[ \exp \bigg\{ \log \bigg(\frac{ \| M^p_n \cdots M^p_1 M^p_0 \nu_p \|}
{ \| M^q_n \cdots M^q_1 M^q_0 \nu_q \|}\bigg)^{\alpha-1} \bigg\} \bigg] \nonumber \\
&=&   {\mathbb E}_p \bigg[ \exp \bigg\{ (\alpha-1)  \bigg( \log \frac{\| T_n^p \nu_p \|/\| T_{n-1}^p \nu_p \|}{ \| T_n^q \nu_q \|/\| T_{n-1}^q \nu_q \|} +
\cdots+ \log \frac{\| T_1^p \nu_p \|/\| T_{0}^p \nu_p \|}{ \| T_1^q \nu_q \|/\| T_{0}^q \nu_q \|} + \log \frac{\| T_0^p \nu_p \|/\|\nu_p\|}{\|T_0^q \nu_q\|/\|\nu_q \|} \bigg) \bigg\} \bigg] \nonumber \\
&=& {\mathbb E}_p \bigg[ \exp \bigg \{(\alpha-1)  \bigg( g(W_{n-1},W_n)+\cdots+g(W_0,W_1) + g(W_0,W_0)
\bigg)   \nonumber  \\
&=& {\mathbb E}_p \bigg[ \exp \big\{ (\alpha-1) S_n\big\} \bigg]. \nonumber
\end{eqnarray}

Then, using $\hat{\bf P}_\alpha$ defined (\ref{operator}) in 
Section~\ref{sec3} with $\vartheta= \alpha - 1$, let $\nu_p$ ($\nu_q$) be the initial
distribution of the $\{W_n, n \geq 0\}$ under $P$ ($Q$). Then we have 
\begin{eqnarray}
&~&  {\mathbb E}_{\nu} e^{(\alpha -1) S_\tau -\Lambda(\alpha) \tau} =	u(\alpha,\Lambda(\alpha))= 1 \label{eigenvalue1} \\ 
&~&  {\mathbb E}_{w} e^{(\alpha -1) S_\tau -\Lambda(\alpha) \tau}= r(w,\alpha).
\label{eigenvalue2}
\end{eqnarray}

Let $\lambda(\alpha)$  be the largest eigenvalue of $\hat{\bf P}_\alpha$.
Denote $r(w,\alpha)$ as the right eigenfunctions associated with
$\lambda(\alpha)$ defined in (\ref{eigenvalue2}). Define
\begin{eqnarray}\label{infsup}
\underline{r}(\alpha) = \inf_w r(w,\alpha),~~~\bar{r}(\alpha)= \sup_w r(w,\alpha), 
\end{eqnarray}
Under conditions C1--C4, by Proposition~\ref{pro2} (iv), the uniform
positivity property, we have  $0 < \underline{r} (\alpha) \leq \bar{r}(\alpha)  < \infty.$

\begin{theorem}\label{thm3.1}
	Under conditions C1--C4, then the R\'{e}nyi divergence rate between $p^{(n)}$
	and $q^{(n)}$ is 
	\begin{eqnarray}\label{3.10}
D_\alpha(p||q):= \lim_{n\to \infty} \frac{1}{n} D^n_\alpha(p^{(n)}||q^{(n)}) = \frac{1}{\alpha - 1} \log \lambda(\alpha),
	\end{eqnarray}
	where $\lambda (\alpha)$  is the largest positive real eigenvalue of
	$\hat{{\bf P}}_\alpha$, and $0 < \alpha < 1$. Furthermore, the same result
	holds for $\alpha > 1$ if $P > 0$ and $Q > 0$.
\end{theorem}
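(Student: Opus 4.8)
The plan is to reduce the Rényi divergence rate to the exponential growth rate of the moment generating function of the additive functional $S_n$, and then to identify that growth rate as the logarithm of the top eigenvalue $\lambda(\alpha)$ of the twisted kernel $\hat{\bf P}_\alpha$. By the chain of identities in (\ref{h(W)}) we already have
\[
D^n_\alpha(p^{(n)}\|q^{(n)}) = \frac{1}{\alpha-1}\log \mathbb{E}_p\big[e^{(\alpha-1)S_n}\big],
\]
so it suffices to prove that $\tfrac1n \log \mathbb{E}_p[e^{(\alpha-1)S_n}] \to \log\lambda(\alpha)$, after which the constant factor $1/(\alpha-1)$ passes through the limit. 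The first thing I would verify is that $\vartheta=\alpha-1$ lies in the domain $\mathcal D$ of (\ref{mom}), so that Proposition~\ref{pro2} applies and $\lambda(\alpha)=e^{\Lambda(\alpha)}$, the eigenfunction $r(\cdot;\alpha)$, and the bounds (\ref{infsup}) are all available.

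The core step is a sandwich argument built from the eigenfunction. Writing $\zeta_k=g(W_{k-1},W_k)$ for the one-step increment, the twisted kernel acts as $\hat{\bf P}_\alpha h(w)=\mathbb{E}_w[e^{(\alpha-1)\zeta_1}h(W_1)]$, and iterating the Markov property gives, for any measurable $h$,
\[
\hat{\bf P}_\alpha^{\,n} h(w)=\mathbb{E}_w\big[e^{(\alpha-1)(S_n-S_0)}h(W_n)\big].
\]
Taking $h=r(\cdot;\alpha)$ and using the eigenrelation $\hat{\bf P}_\alpha r=\lambda(\alpha)r$ from (\ref{eigenvalue2}) yields $\mathbb{E}_w[e^{(\alpha-1)(S_n-S_0)}r(W_n;\alpha)]=\lambda(\alpha)^n r(w;\alpha)$. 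Inserting the uniform bounds $0<\underline r(\alpha)\le r(\cdot;\alpha)\le \bar r(\alpha)<\infty$ from (\ref{infsup}), applied both to the factor $r(W_n;\alpha)$ inside the expectation and to $r(w;\alpha)$ on the right, produces the two-sided estimate
\[
\frac{\underline r(\alpha)}{\bar r(\alpha)}\,\lambda(\alpha)^n \;\le\; \mathbb{E}_w\big[e^{(\alpha-1)(S_n-S_0)}\big] \;\le\; \frac{\bar r(\alpha)}{\underline r(\alpha)}\,\lambda(\alpha)^n,
\]
uniformly in $w\in{\cal S}$.

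To finish, I would reinsert the boundary term $S_0=g(W_0,W_0)$ and integrate against the initial distribution $\nu_p$ of $W_0$. Since $S_0$ is a function of $W_0$ alone, $\mathbb{E}_p[e^{(\alpha-1)S_n}]=\int e^{(\alpha-1)S_0(w)}\mathbb{E}_w[e^{(\alpha-1)(S_n-S_0)}]\,\nu_p(dw)$, so the two-sided estimate persists up to the additional finite, strictly positive constant $c:=\mathbb{E}_{\nu_p}[e^{(\alpha-1)S_0}]$, whose finiteness follows from C3 together with the exponential-moment bound of Proposition~\ref{pro1}. Taking $\tfrac1n\log(\cdot)$ and letting $n\to\infty$, the factors $\underline r/\bar r$, $\bar r/\underline r$ and $c$ all wash out, leaving $\lim_n \tfrac1n\log\mathbb{E}_p[e^{(\alpha-1)S_n}]=\log\lambda(\alpha)$, which is the claim. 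The case distinction is purely about integrability: for $\alpha\in(0,1)$ one has $\vartheta=\alpha-1\in(-1,0)$ and the relevant negative-order moment is finite under C1--C4, whereas for $\alpha>1$ the exponent $\vartheta>0$ makes $\mathbb{E}_p[e^{(\alpha-1)S_n}]$ sensitive to regions where $q_n$ is small, so the extra hypotheses $P>0$, $Q>0$ are what keep $\alpha-1\in\mathcal D$ and the m.g.f.\ finite.

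I expect the main obstacle to be not the sandwich itself but the input it consumes, namely that the top eigenfunction is bounded away from both $0$ and $\infty$ on the entire enlarged state space ${\cal S}$, i.e.\ $0<\underline r(\alpha)\le \bar r(\alpha)<\infty$. Proposition~\ref{pro2}(iii)--(iv) only supplies local positivity on small sets and a $\vartheta$-independent partition, and upgrading this to a global two-sided bound is exactly where condition C4 (the multiplicative drift inequalities (\ref{eignmeapro1})--(\ref{eignmeapro2})) and Theorem~4 of \cite{Chan_Lai_2003} enter; confirming $\alpha-1\in\mathcal D$ in the $\alpha>1$ regime is the other delicate point.
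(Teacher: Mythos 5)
Your proposal is correct and follows essentially the same route as the paper: both reduce $D^n_\alpha$ to $\tfrac{1}{\alpha-1}\log \mathbb{E}_p[e^{(\alpha-1)S_n}]$ via (\ref{h(W)}), then sandwich the twisted-kernel iterates between constant multiples of $\lambda(\alpha)^{n}$ using the eigenrelation and the uniform two-sided bounds $0<\underline r(\alpha)\le \bar r(\alpha)<\infty$ from (\ref{infsup}) (which is exactly where C4 and the Chan--Lai extension enter, as you note), so that the constants wash out under $\tfrac1n\log$. Your treatment of the boundary term $S_0$ and the initial distribution is in fact slightly more explicit than the paper's, which compresses that step into the remark that (\ref{3.14}) ``holds for both $p$ and $q$.''
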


\begin{proof} 
Let $\lambda(\alpha)$ be the largest positive real eigenvalue of $\hat{{\bf
P}}_\alpha$ defined in (\ref{operator}), with associated positive right
eigenfuction $r(w,\alpha) > 0$ uniformly on ${\cal S}$. 
Then by (\ref{eigenvalue2}), we have
\begin{eqnarray}\label{4.16B}
\nu_\alpha \hat{{\bf P}}_\alpha^{n-1} r(w;\alpha) = \lambda^{n-1} (\alpha) r(w;\alpha).
\end{eqnarray}
Let $\underline{r} (\alpha)$ and  $\bar{r}(\alpha)$ be defined in (\ref{infsup}).
Then $0 < \underline{r} (\alpha) \leq r(w,\alpha) \leq \bar{r}(\alpha)  <
\infty,$  for all $w \in \cal S$. 
Let $\nu_\alpha \hat{{\bf P}}_\alpha^{n-1} {\bf 1} =  b(\alpha)$. Then by
(\ref{infsup}), we have 
\begin{eqnarray}
  \lambda^{n-1} (\alpha) r(w;\alpha) = \nu_\alpha \hat{{\bf P}}_\alpha^{n-1} r(w;\alpha) \leq \bar{r}(\alpha) b(\alpha). \nonumber
\end{eqnarray}
Similary we have $\lambda^{n-1} (\alpha) r(w;\alpha) \geq \underline{r}(\alpha)
b(\alpha).$

Therefore,
\begin{eqnarray}
\frac{r(w;\alpha)}{\bar{r}(\alpha)} \leq \frac{b(\alpha)}{ \lambda^{n-1} (\alpha)}  \leq \frac{r(w,\alpha)}{\underline{r}(\alpha)}. \nonumber
\end{eqnarray}

Since 
\begin{eqnarray}
\frac{\int_w r(w;\alpha) \Pi(dw)}{\bar{r}(\alpha)} \leq \frac{b(\alpha) }{ \lambda^{n-1} (\alpha)}  \leq \frac{\int_w r(w;\alpha) \Pi(dw)}{\underline{r}(\alpha)}, \nonumber
\end{eqnarray}
we have
\begin{eqnarray}\label{upperlower}
\frac{1}{n} \log \frac{\int_w r(w;\alpha) \Pi(dw)}{\bar{r}(\alpha)} \leq \frac{1}{n} \log \frac{b(\alpha) }{ \lambda^{n-1} (\alpha)}  \leq \frac{1}{n} \log \frac{\int_w r(w;\alpha) \Pi(dw)}{\underline{r}(\alpha)}.
\end{eqnarray}
Note that the constant terms in the upper- and lower-bound in
(\ref{upperlower}) are bounded and independent of $n$, which approach $0$ as
$n \to \infty$. Therefore, we have
   \begin{eqnarray}
    \lim_{n \to \infty} \frac{1}{n} \log \frac{\nu_\alpha \hat{{\bf P}}_\alpha^{n-1} {\bf 1} }{ \lambda^{n-1} (\alpha)}  = 0.
   \end{eqnarray}
   
   Hence
    \begin{eqnarray}\label{3.14}
  \lim_{n \to \infty} \frac{1}{n} \log \nu_\alpha \hat{{\bf P}}_\alpha^{n-1} {\bf 1}  
   =  \lim_{n \to \infty} \frac{1}{n} \log \lambda^{n-1} (\alpha)+ \lim_{n \to \infty} \frac{1}{n} \log \frac{\nu_\alpha
   	\hat{{\bf P}}_\alpha^{n-1} {\bf 1} }{ \lambda^{n-1} (\alpha)} = \log \lambda(\alpha).
   \end{eqnarray}
   Note that (\ref{3.14}) holds for both $p$ and $q$.
   Thus
   \begin{eqnarray}
\lim_{n \to \infty} \frac{1}{n} D^n_\alpha(p^{(n)}||q^{(n)}) 
   = \frac{1}{\alpha-1} \log \lambda(\alpha).
   \end{eqnarray}
   The proof is complete. 
   \end{proof}
   
\subsection{Kullback--Leibler divergence}\label{sec4.2}
   
By making use of Theorem~\ref{thm3.1}, we herein show that the R\'{e}nyi
divergence reduces to the Kullback--Leibler divergence as $\alpha \to 1$. 
Let us first note the following result about the computation of the
Kullback--Leibler divergence rate between two general HMMs. The convergence rate
of the Kullback--Leibler divergence has been investigated by \cite{Fuh_2004a},
and \cite{Fuh_Mei_2015} for the parametric case. In the following proposition, 
we then show that the Kullback--Leibler  divergence for general HMMs can also be
written in a form similar to that in the i.i.d.\ case.

Recall that $\{Y_0,Y_1,\cdots,\}$ is a general HMM. Let $p^{(n)}$ and $q^{(n)}$
be two probability distributions. Let $P$ and $Q$ be the probabilities associated
with $p^{(n)}$ and $q^{(n)}$, respectively. Let $\nu_p$ and $\nu_q$ be two
initial distributions with respect to $p^{(n)}$ and $q^{(n)}$, respectively. If
$Q > 0$, then $q > 0$. Denote
\begin{eqnarray}\label{kl}
D^n(p^{(n)}||q^{(n)}) &=& \log  \bigg(\frac{ p_n(Y_0, Y_1,\cdots,Y_n)}
{ q_n(Y_0, Y_1,\cdots,Y_n)}\bigg).
\end{eqnarray}
\begin{proposition}\label{Proposition1}
Under conditions C1--C4, the Kullback--Leibler divergence rate between
$p^{(n)}$ and $q^{(n)}$ is well-defined with
\begin{eqnarray}\label{eqn:KLrep}
K(p, q) =  \lim_{n \to \infty}  \frac{1}{n} D^n(p^{(n)}||q^{(n)}) = {\mathbb E_{\Pi,p}} \left[ \log p_1(Y_{0},Y_1) \right]
 -  {\mathbb E_{\Pi,q}}\left[ \log q_1(Y_0, Y_{1}) \right],
\end{eqnarray}
where ${\mathbb E_{\Pi,p}}$ (${\mathbb E_{\Pi,q}}$) is the expectation of
${\mathbb P}_p$ (${\mathbb P}_q$) defined in (\ref{2.10})  of Section~\ref{2.1}
under the invariant probability $\Pi$ of $\{W_n,n \geq 0\}$. Here ${\mathbb P}_p$ denotes
the probability under $p$.
\end{proposition}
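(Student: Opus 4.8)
The plan is to represent $\frac1n D^n(p^{(n)}\|q^{(n)})$ as a normalized additive functional of the induced Markov chain $\{W_n,n\ge 0\}$ and then to invoke the strong law of large numbers for Harris recurrent chains. First I would combine the matrix representation (\ref{2.5}) with the telescoping identity (\ref{2.8}), exactly as in the computation leading to (\ref{h(W)}): writing $D^n(p^{(n)}\|q^{(n)})=\log\|T_n^p\nu_p\|-\log\|T_n^q\nu_q\|$ and telescoping each norm gives $D^n(p^{(n)}\|q^{(n)})=S_n=g(W_0,W_0)+\sum_{k=1}^n g(W_{k-1},W_k)$, where the increment $g(W_{k-1},W_k)=\log\frac{\|T_k^p\nu_p\|/\|T_{k-1}^p\nu_p\|}{\|T_k^q\nu_q\|/\|T_{k-1}^q\nu_q\|}$ is the additive functional of $\{W_n\}$ introduced in Section~\ref{sec4.1}. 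Thus $\frac1n D^n=\frac1n S_n$ is precisely a Birkhoff average of $g$ along the chain, and it suffices to show that this average converges to $\mathbb{E}_\Pi[g(W_0,W_1)]$ and to evaluate the latter.

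Second, Proposition~\ref{pro1} guarantees that $\{W_n\}$ is aperiodic, irreducible, and Harris recurrent with invariant probability $\Pi$, and supplies the regenerative (split-chain) structure described after Proposition~\ref{pro2}, so that the i.i.d.\ block sums $S_j(g)$ are available. I would then verify the single integrability hypothesis $\mathbb{E}_\Pi[|g(W_0,W_1)|]<\infty$: the exponential moment $\mathbb{E}_w[\exp\{a\,g(W_0,W_1)\}]\le C$ from Proposition~\ref{pro1}, together with the moment condition C3 and the hypothesis $Q>0$ (which keeps $\log q_1$ finite), controls both tails. The SLLN for additive functionals of a Harris recurrent chain then yields $\frac1n S_n\to \mathbb{E}_\Pi[g(W_0,W_1)]$ almost surely, and the same exponential moment furnishes the uniform integrability needed to upgrade this to $L^1$ convergence and to conclude that the rate is well-defined and finite.

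Third---and here is the real content---I would identify $\mathbb{E}_\Pi[g(W_0,W_1)]$ with the right-hand side of (\ref{eqn:KLrep}). The device is the prediction-filter interpretation of the projective coordinate: if $u=\overline{T_{k-1}^p\nu_p}$ is the normalized filter, the structure of $M_k^p$ in (\ref{2.6}) gives $\|M_k^p u\|=p(Y_k\mid Y_{0:k-1})$, so the single-model increment $g^p(W_{k-1},W_k)=\log(\|M_k^p u\|/\|u\|)$ is exactly the one-step conditional log-density, and similarly for $q$. Splitting $g=g^p-g^q$ and taking the stationary expectation, under $\Pi$ the filter has reached its invariant law, so the initial distribution $\nu$ appearing in $p_1,q_1$ in (\ref{densityP})--(\ref{densityQ}) is replaced by the stationary filter; this gives $\mathbb{E}_\Pi[g^p(W_0,W_1)]=\mathbb{E}_{\Pi,p}[\log p_1(Y_0,Y_1)]$ and $\mathbb{E}_\Pi[g^q(W_0,W_1)]=\mathbb{E}_{\Pi,q}[\log q_1(Y_0,Y_1)]$, and subtracting yields (\ref{eqn:KLrep}).

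The main obstacle is this last identification. One must justify, using the $V$-uniform ergodicity in C1--C2, that the log-predictive increments converge to their stationary counterparts and that the boundary term $g(W_0,W_0)$ and the discrepancy between the filter initialized at $\nu$ and the stationary filter are negligible in the Ces\`aro limit. The genuinely delicate analytic point is the integrability of $\log q_1(Y_0,Y_1)$ under $P$, where the $q$-density may be arbitrarily small; this is exactly what the moment condition C3 and the positivity assumption $Q>0$ are designed to control, and checking that $g^q$ is $\Pi$-integrable is where I expect the bulk of the technical effort to go.
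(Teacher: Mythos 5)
Your proposal follows essentially the same route as the paper's proof: telescoping the matrix representation \eqref{2.5}--\eqref{2.8} to write $\log p_n - \log q_n$ as the additive functional $S_n$ of the induced chain $\{W_n\}$, invoking Proposition~\ref{pro1} and the SLLN for Markov random walks from \cite{Meyn_Tweedie_2009}, and identifying the stationary mean of the increment with the right-hand side of \eqref{eqn:KLrep}. The extra detail you supply on integrability (via the exponential moment bound and C3) and on the prediction-filter interpretation of the increments merely fills in steps the paper leaves implicit, so the two arguments coincide in substance.
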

\begin{proof} 
 Under conditions C1--C4, by Proposition~\ref{pro2}, the invariant probability
 $\Pi$ of the induced Markov chain $\{W_{n}, n \geq 0\}$ exists. Recall
 $T_n=M_n \cdots M_1 M_0$ defined in (\ref{mirfs}). Now let $M_n^p$ ($M_n^q$)
 be $M_n$ defined in (\ref{2.6m0}) and (\ref{2.6}) when the probability is
 under $P$ ($Q$). We can define $T_n^p$ and $T_n^q$ similarly. First, it is
 easy to see from (\ref{2.11}) that
	\begin{align}\label{pfKLexist-1}
	\notag
	& \frac{1}{n} \left[ \log p_n(Y_0,Y_1,\cdots,Y_n) - \log p_n(Y_0,Y_1,\cdots,Y_n) \right] 
	= \frac{1}{n} \left[ \log \|T_n^p \| - \log \|T_n^q\| \right] \\
	= & \frac{1}{n} \sum_{i=1}^n g_p(W_{i}, W_{i-1}) - g_q(W_{i}, W_{i-1}).
	\end{align}
	Taking $n \rightarrow \infty$ on both sides of \eqref{pfKLexist-1}, then by
	Proposition~\ref{pro1} and the SLLN for Markov random walks in
	\cite{Meyn_Tweedie_2009}, we have
	\begin{align*}
	K(p, q)  = & {\mathbb E_{\Pi,p}} \left[ g_p(W_1, W_0) \right]
	- {\mathbb E_{\Pi,q}} \left[ g_q(W_1, W_0) \right] \\
	= & {\mathbb E_{\Pi,p}} \left[ \log p_1(Y_{0},Y_1) \right]
	-  {\mathbb E_{\Pi,q}}\left[ \log q_1(Y_0, Y_{1}) \right],
	\end{align*}
	which completes the proof. 
	\end{proof}

\begin{theorem}
	Let $\alpha \in (0,1) \cup (1,\infty)$. 
Assume conditions C1--C4 hold; then 
	\begin{eqnarray}\label{3.18}
	&~& \lim_{\alpha \to 1} \lim_{n\to \infty}  \frac{1}{n} D^n_\alpha(p^{(n)}||q^{(n)}) =  \lim_{n\to \infty} \lim_{\alpha \to 1}  \frac{1}{n} D^n_\alpha(p^{(n)}||q^{(n)})   \\
	&=& \int_{w_0} \int_{w_1} \Pi_{w_0} {\mathbb P}(w_0,w_1) \log \frac{{\mathbb P}(w_0,w_1)}{{\mathbb Q}(w_0,w_1)}  
	d w_1 d w_0 = K(p,q), \nonumber
	\end{eqnarray} 
	which is the Kullback--Leibler divergence defined in {\rm (\ref{eqn:KLrep}). }
\end{theorem}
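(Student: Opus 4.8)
The plan is to evaluate the two iterated limits separately, show that each equals $K(p,q)$ so that the interchange follows automatically, and to observe that the middle integral in~(\ref{3.18}) is merely the explicit representation of $K(p,q)$ supplied by Proposition~\ref{Proposition1}. For the order $\lim_{\alpha\to1}\lim_{n\to\infty}$, Theorem~\ref{thm3.1} reduces the inner limit to $\frac{1}{\alpha-1}\log\lambda(\alpha)=\frac{\Lambda(\alpha-1)}{\alpha-1}$, where $\lambda(\alpha)=e^{\Lambda(\alpha-1)}$ and $\Lambda$ is characterized by $u(\vartheta,\Lambda(\vartheta))=1$ with $\vartheta=\alpha-1$. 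At $\vartheta=0$ the twisted kernel $\hat{\bf P}_\vartheta$ of~(\ref{operator}) reduces to the Markov transition kernel $\mathbb P$ of $\{W_n,n\ge0\}$ itself, whose largest eigenvalue is $1$; hence $\Lambda(0)=0$ and the ratio is of the indeterminate form $0/0$. Since Proposition~\ref{pro2}(i) guarantees that $\Lambda$ is analytic on the open set $\mathcal D\ni0$, l'H\^{o}pital's rule (equivalently, a first-order Taylor expansion about $\vartheta=0$) gives $\lim_{\alpha\to1}\frac{\Lambda(\alpha-1)}{\alpha-1}=\Lambda'(0)$.

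The crux is then to identify $\Lambda'(0)$ with $K(p,q)$. Implicit differentiation of $u(\vartheta,\Lambda(\vartheta))=1$ at $\vartheta=0$, using $u(0,0)=E_\nu[1]=1$, gives $\Lambda'(0)=E_\nu[S_\tau]/E_\nu[\tau]$, the ratio of the expected cycle sum to the expected cycle length over one regeneration block of the split chain. By the renewal--reward (ratio) formula for additive functionals of a Harris recurrent chain, this ratio equals the $\Pi$-stationary average of the log-likelihood-ratio increment $g(W_0,W_1)$, which Proposition~\ref{Proposition1} identifies with the Kullback--Leibler divergence rate $K(p,q)$. The finiteness of $E_\nu[\tau]$ and $E_\nu[S_\tau]$ required here is furnished by Proposition~\ref{pro1} (the exponential moment of the increment) together with the open-set condition~(\ref{mom}).

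For the reversed order $\lim_{n\to\infty}\lim_{\alpha\to1}$, I would fix $n$ and first compute the inner limit. Writing $h_n(\alpha)=\log\mathbb E_p[(p_n/q_n)^{\alpha-1}]$ one has $h_n(1)=0$ and $h_n'(1)=\mathbb E_p[\log(p_n/q_n)]=D^n(p^{(n)}\|q^{(n)})$, so l'H\^{o}pital gives $\lim_{\alpha\to1}D^n_\alpha(p^{(n)}\|q^{(n)})=D^n(p^{(n)}\|q^{(n)})$, the standard degeneration of the R\'{e}nyi divergence to the Kullback--Leibler divergence; here differentiation under the expectation is legitimate by the moment condition~C3. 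Dividing by $n$ and letting $n\to\infty$, Proposition~\ref{Proposition1} yields $\frac1n D^n(p^{(n)}\|q^{(n)})\to K(p,q)$. Both iterated limits therefore equal $K(p,q)$, and since the middle integral in~(\ref{3.18}) is precisely the representation of $K(p,q)$ established in Proposition~\ref{Proposition1}, the whole chain of equalities follows.

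The main obstacle is the first order of limits, namely securing $\lim_{\alpha\to1}\frac{1}{\alpha-1}\log\lambda(\alpha)=\Lambda'(0)=K(p,q)$, which bridges the spectral description of the divergence rate in Theorem~\ref{thm3.1} with the ergodic description in Proposition~\ref{Proposition1}. The delicate points are (i) verifying that $0\in\mathcal D$ and that $\Lambda$ is differentiable there, and (ii) justifying the passage from the regenerative ratio $E_\nu[S_\tau]/E_\nu[\tau]$ to the stationary average $\mathbb E_\Pi[g(W_0,W_1)]$; both rest on the uniform eigenfunction bounds and exponential-moment estimates guaranteed by conditions~C1--C4 via Propositions~\ref{pro1} and~\ref{pro2}. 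Once this bridge is secured, the equality of the two orders of limits is immediate, as both sides reduce to the same quantity $K(p,q)$.
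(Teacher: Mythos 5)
Your proposal is correct and follows essentially the same route as the paper's proof: for the order $\lim_{n\to\infty}\lim_{\alpha\to 1}$ you apply l'H\^{o}pital at fixed $n$ to obtain $\frac{1}{n}\mathbb{E}_p[S_n]$ and then invoke Proposition~\ref{Proposition1}, and for the order $\lim_{\alpha\to 1}\lim_{n\to\infty}$ you combine Theorem~\ref{thm3.1} with implicit differentiation of $u(\vartheta,\Lambda(\vartheta))=1$ and the regenerative ratio formula (the paper's citation of Lemma~5.2 of Ney--Nummelin is exactly your renewal--reward step). The only difference is notational (you parameterize $\Lambda$ by $\vartheta=\alpha-1$ and differentiate at $0$ rather than at $\alpha=1$), so no substantive gap remains.
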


\begin{proof}	To prove (\ref{3.18}), we first consider the case of $
\lim_{n\to \infty} \lim_{\alpha \to 1}  \frac{1}{n}
D^n_\alpha(p^{(n)}||q^{(n)})$.  By (\ref{renyi}), (\ref{renyimatrix}), and
(\ref{h(W)}), we have
\begin{eqnarray}\label{h(W)1}
D^n_\alpha(p^{(n)}||q^{(n)}) &=& \frac{1}{\alpha - 1} \log E_p \bigg[ \bigg(\frac{ p_n(Y_0, Y_1,\cdots,Y_n)}
{ q_n(Y_0, Y_1,\cdots,Y_n)}\bigg)^{\alpha-1} \bigg] \\
&=& \frac{1}{\alpha - 1} \log {\mathbb E}_p \bigg[ \bigg(\frac{ \| M^p_n \cdots M^p_1 M^p_0 \nu_p \|}
{ \| M^q_n \cdots M^q_1 M^q_0 \nu_q \|}\bigg)^{\alpha-1} \bigg] \nonumber \\
&=& \frac{1}{\alpha - 1} \log {\mathbb E}_p \bigg[ \exp \big\{ (\alpha-1) S_n\big\} \bigg]. \nonumber
\end{eqnarray}

Then
\begin{eqnarray}\label{h(W)2}
&~&   \lim_{n\to \infty} \lim_{\alpha \to 1}  \frac{1}{n} D^n_\alpha(p^{(n)}||q^{(n)})   \\
&=& \lim_{n\to \infty}  \frac{1}{n}  \lim_{\alpha \to 1} \frac{1}{\alpha - 1}   \log {\mathbb E}_p \bigg[ \exp \big\{ (\alpha-1) S_n \big\} \bigg]. \nonumber \\
&=& \lim_{n\to \infty}  \frac{1}{n}   {\mathbb E}_p \big[  S_n  \big]
= \int_{w_0} \int_{w_1} \Pi_{w_0} {\mathbb P}(w_0,w_1) \log \frac{{\mathbb P}(w_0,w_1)}{{\mathbb Q}(w_0,w_1)}  
d w_1 d w_0. \nonumber
\end{eqnarray}
Note that the second identity comes from L'Hospital's Rule and the last
identity in (\ref{h(W)2}) comes from  (\ref{eqn:KLrep}) in 
Proposition~\ref{Proposition1}.

Next, we consider the case of $\lim_{\alpha \to 1} \lim_{n\to \infty}
\frac{1}{n} D^n_\alpha(p^{(n)}||q^{(n)})$.

By (\ref{3.10}) in Theorem~\ref{thm3.1}, we have  
\[ \lim_{n\to \infty} \frac{1}{n} D^n_\alpha(p^{(n)}||q^{(n)}) = \frac{1}{\alpha - 1} \log \lambda (\alpha). \]
To evaluate $\lim_{\alpha \to 1}\frac{1}{\alpha - 1} \log \lambda (\alpha)$,
note by Proposition~\ref{pro2}~(i) that the eigenvalue $\lambda(\alpha)$ of
$\hat{\bf P}_\alpha$  is a continuous differentiable function of $\alpha$.
Note that since $Q > 0$, we have 
\[ \lim_{\alpha \to 1} \lambda(\alpha) = 1. \]
Let $a$ denote an arbitrary base of the logarithm. Then, by L'Hopital's rule,
we find that
\begin{eqnarray}\label{3.21}
\lim_{\alpha \to 1} \frac{\log \lambda(\alpha)}{\alpha -1} = \frac{1}{\ln a} \lambda'(1) := \frac{1}{\ln a} \frac{\partial \lambda(\alpha)}{\partial \alpha}\bigg|_{\alpha=1},
\end{eqnarray}
which is well defined by Proposition~\ref{pro2} since the algebraic
multiplicity of $\lambda(\alpha)$ is $1$ by 
(\ref{operator}). 
The equation defining the largest positive eigenvalue $\lambda(\alpha) = 1$ of
$\hat{\bf P} (w,A):= {\mathbb E}_w \{I_{W_1 \in A}\} = {\mathbb P}_w(A)$. By
Proposition~\ref{pro2}~(i), $\hat{\bf P}_\alpha$ is analytic for $\alpha \in
{\cal D}$; therefore by (\ref{operator}), it is straightforward to check that
$\hat{\bf P}_\alpha \to \hat{\bf P}$ as $\alpha \to 1$.

Note that  for $\zeta \in {\bf R}$,  $u(\alpha,\zeta)= {\mathbb E}_{\Pi}
e^{(\alpha-1) S_\tau -\zeta \tau}$ defined in (\ref{eigenvalue}). Then the
transition kernel $\hat{\bf P}_{\alpha}(w,A ) = {\mathbb E}_{w}\{ e^{(\alpha-1)
\zeta_1} I_{\{W_1 \in A \}}\}$ has a maximal simple real eigenvalue
$\lambda(\alpha) = e^{\Lambda(\alpha)}$, where $\Lambda(\alpha)$ is the unique
solution of the equation $u(\alpha,\Lambda(\alpha))=1$. Then using $\Lambda(1)
= \log \lambda(1) = \log 1 =0$, we have
\begin{eqnarray}\label{3.22}
&~& u(\alpha,\Lambda(\alpha))=1 \Longrightarrow \frac{\partial u(\alpha,\Lambda(\alpha))} {\partial \alpha}|_{\alpha=1}=0 
\Longrightarrow {\mathbb E}_{\Pi}  \{(S_\tau - \Lambda'(\alpha) \tau) e^{(\alpha-1) S_\tau -\Lambda(\alpha) \tau} \}|_{\alpha=1}=0 \nonumber \\
&\Longrightarrow& {\mathbb E}_{\Pi}  \{(S_\tau - \Lambda'(\alpha) \tau) e^{(\alpha-1) S_\tau -\Lambda(\alpha) \tau} \}|_{\alpha=1}=0
\Longrightarrow {\mathbb E}_{\Pi}  \{(S_\tau - \Lambda'(1) \tau)  \}=0 \nonumber \\
&\Longrightarrow& \lambda'(1) = \frac{\partial \lambda(\alpha)}{\partial \alpha}|_{\alpha =1}= \frac{  {\mathbb E}_{\Pi}  S_\tau}{{\mathbb E}_{\Pi} \tau} = {\mathbb E}_\Pi S_1.
\end{eqnarray}
The last identity in (\ref{3.22}) comes from Lemma 5.2 of \cite{Ney_Nummelin_1987}.

By using (\ref{3.21}) and (\ref{3.22}), we  obtain
\begin{eqnarray}\label{3.19a}
 \lim_{\alpha \to 1} \frac{1}{\alpha -1} \log \lambda(\alpha)   
= \int_{w_0} \int_{w_1} \Pi_{w_0} {\mathbb P}(w_0,w_1) \log \frac{{\mathbb P}(w_0,w_1)}{{\mathbb Q}(w_0,w_1)}  d w_1 d w_0 = K(p,q), 
\end{eqnarray}
which completes the proof.  
\end{proof}

\def\theequation{4.\arabic{equation}}
\setcounter{equation}{0}
\section{Examples}\label{sec5}


We present two examples of general HMMs in this section. Section~\ref{msm}
considers the Markov switching models, whereas Section~\ref{RNN} studies the
RNN.

\subsection{Markov switching models}\label{msm}

We start with a simple real valued $q$-order autoregression
around one of $d$ constants $\mu_1,\cdots, \mu_d$:
\begin{eqnarray}\label{mg1}
Y_n - \mu_{X_n}
= \sum_{k=1}^q \psi_k(Y_{n-k} - \mu_{X_{n-k}}) + \varepsilon_n,
\end{eqnarray}
where $\varepsilon_n \sim N(0,\sigma^2)$, $|\psi_k|<1$ for $k=1,\cdots,q$, and
$\{X_n, n \geq 0\}$ is a $d$-state ergodic Markov chain. When $q=4$ and $d=2$,
this model was studied by Hamilton (1989) in order to analyze the behavior of
the U.S.\ real GNP.
Note that the Markov switching model (\ref{mg1}) includes the classical HMM by
letting $\psi_k=0$ for $k=1,\cdots,q$. To apply our theory in the form of
(\ref{mg1}), we consider a simple case of order~$1$ in (\ref{mg1}) with $d=2$.
The extension to the general case is straighforward.
In this case, the conditional probability given $X_n = x_n$ and
$Y_{n-1}=y_{n-1}$, $n\geq 1$, is
\begin{eqnarray}\label{mg2}
f(y_n|x_n, y_{n-1};\theta) = \frac{1}{\sqrt{2\pi}\sigma} \exp \bigg( -[ (y_n -
\mu_{x_n}) -  \psi_1(y_{n-1} - \mu_{x_{n-1}})]^2/2\sigma^2 \bigg).
\end{eqnarray}
Denote $[p_{ij}]_{i,j=1,2}$ as the transition probability of the underlying
Markov chain $\{X_n, n \geq 0\}$ and let $\theta =(p_{11},
p_{21},\psi_1,\mu_1,\mu_2,\sigma^2)$ be the given parameter.
Assume that $| \psi_1 | <1$ for the stability property, and that there exists a
constant $c>0$ such that $\sigma^2 >c$. Moreover, we assume that $\mu_1 \neq
\mu_2$. 
Since the state space of $X_n$ is finite, we consider $0< p_{ij} < 1$ for all
$i,j =1,2$, and for $j=1,2$ let $V_j(y)= |y| +1 $ 
(cf.\ page~394 of \cite{Meyn_Tweedie_2009})
such that the condition C1 holds. 
Under the normal distribution assumption, it is easy to see that (\ref{2.15})
in  conditions C1  and C3 holds.

Next we check that the mean contraction property (\ref{2.16}) in C2 
for  a simple Markov switching model with general innovation holds. 
	Given $p \geq 1$ as in C2, and $|\psi| < 1$, let  $X_n$ be a two-state
	Markov chain, and $Y_n = \mu_{X_n} + \psi Y_{n-1} + \varepsilon_n,$ where
	$\varepsilon_n$ are i.i.d.\ random variables with $E|\varepsilon_1|=a <
	\infty$. Further, we assume both $\varepsilon_1$  have a positive probability
	density function with respect to the Lebesgue measure.
	Denote $h(Y_1)=C < 1$, $b=(1-|\psi|^p)/(1-|\psi|)$ and choose $p$ such that $C^p (ab + 1) < 1$. 
	Let $d(u,v)=|u-v|.$ 
	Then we have 
	\begin{eqnarray}\label{4.3b}
	&~& \sup_y \bigg\{{E}_j\bigg(\log \frac{h(Y_1)^p V_j(Y_p)}{V_j(y)}|Y_0=y\bigg)\bigg\} \\
	&<& \sup_y \bigg\{{E}_j\bigg(\log \frac{C^p(|\psi^p y + \sum_{k=0}^{p-1} \psi^k 
		\varepsilon_{p-k}|+1)}{|y|+1}|Y_0=y\bigg)\bigg\} \nonumber \\
	&<& \log   \sup_{y} \bigg\{  \frac{C^p ( |\psi^p y| + E|\sum_{k=0}^{p-1} \psi^k \varepsilon_{p-k}|+1)}{|y|+1} \bigg\}
	= \log   \sup_{y}\bigg\{ \frac{C^p (|\psi^p y| +  ab + 1)}{|y|+1} \bigg\} < 0. \nonumber
	\end{eqnarray}
	By using the same argument, it is straightforward to check that
	(\ref{2.17}) in C2 holds. By making use the same argement as that in Example 2 of \cite{Chan_Lai_2003}, we can check C4 hold.
	 In Section~\ref{sec7}, we will present a numerical
	computation method of the R\'{e}nyi divergence under model (\ref{mg1}) with
	$q=1$ and $d=2$.

\subsection{Recurrent neural network}\label{RNN}

Note that at the outset, RNN is a non-linear dynamical system commonly trained
to fit sequence data via some variant of gradient descent. In this subsection, 
we treat RNN as a stochastic model as usual, 
cf.\ \cite{Goodfellow_Bengio_Courville_2016}. \cite{Fuh_2021b} considers the example
of RNN from a Markovian-iterated function system point of view, to study its
stability. Here we apply a general HMM point of view to investigate RNN.
Although some notation overlaps between these two parts, we include it
here for completeness.
A connection between the classical finite state HMM and RNN is in \cite{Buys_Bisk_Choi_2018}.

An RNN can take as input a variable-length sequence $y=(y_1, \cdots,y_n)$ by
recursively processing each symbol while maintaining its internal hidden state
$h$. At each time step $n$, the RNN reads the symbol $Y_n \in {\bf R}^q$ and
updates its hidden state $h_n \in {\bf R}^p$ by
\begin{eqnarray}\label{RNN1}
h_n = f_{\theta} (Y_n,h_{n-1}),
\end{eqnarray}
where $f_\theta$ is a deterministic non-linear transition function, and
$\theta$ is the parameter of $f_\theta$. 

The transition function $f_\theta$ can be implemented with gated activation
functions such as long short-term memory (LSTM) or the gated recurrent unit (GRU).
The joint probability of the RNN model sequence can be written as a product
of conditional probabilities such that
\begin{eqnarray}\label{RNN2}
P(Y_1, \cdots,Y_n) = \prod_{k=1}^n P(Y_k|Y_1,\cdots,Y_{k-1}) = \prod_{k=1}^n g_{\gamma}(h_{n-1}), 
\end{eqnarray}
where $g_\gamma$ is a function that maps the RNN hidden state $h_{t-1}$ to a
probability distribution over possible outputs, and $\gamma$ is the parameter
of $g_\gamma$. 

To analyze (\ref{RNN1}) and (\ref{RNN2}), we provide a Markov chain framework
as follows.  
Specifically, let ${\bf H}= \{h_n, n \geq 0 \}$ be a sequence of random
variables on $({\bf R}^p, {\cal B({\bf R}}^p))$, and suppose that a random
sequence $\{Y_n\}_{n=0}^{\infty}$ taking values in ${\bf R}^q$ is adjoined to
${\bf H}$ such that $\{Z_n:=(h_{n-1}, h_n,Y_n), n \geq 0\}$  is a Markov chain
on ${\bf R}^p \times {\bf R}^p \times {\bf R}^q$ satisfying 
\begin{eqnarray}\label{RNN3a}
&~& P \{(h_{n-1},h_n) \in A, Y_{n} \in  B | h_0,h_1,\cdots,h_{n-1};Y_0,Y_1,\cdots, Y_{n-1} \} \\
&=& \int_{y \in B} P\{ (h_{n-1},h_n) \in A | h_0,h_1,\cdots,h_{n-1};Y_0,Y_1,\cdots, Y_{n-1},  Y_{n} \in  dy \} \nonumber \\
&~& ~~~~~~\times P\{ Y_{n} \in  dy | h_0,h_1,\cdots,h_{n-1};Y_0,Y_1,\cdots, Y_{n-1} \} Q(dy) \nonumber \\
&=& \int_{y \in B} P\{ (h_{n-1},h_n) \in A | (h_{n-2},h_{n-1}); Y_{n} \in  dy \}   P \{ Y_{n} \in  dy | (h_{n-2},h_{n-1}); Y_{n-1} \} Q(dy) \nonumber \\
&=& \int_{y \in B} I_{ \{ (h_{n-1},h_n) \in A | (h_{n-2},h_{n-1}); Y_{n} \in  dy \}}   P \{ Y_{n} \in  dy |( h_{n-2},h_{n-1}); Y_{n-1} \} Q(dy) \nonumber 
\end{eqnarray}
for  $A \in {\cal B}({\bf R}^p \times {\bf R}^p)$, $B \in {\cal B}({\bf R}^q)$
and  each $n=1,2,\cdots$.

By considering $X_n$ to be degenerate and $\{(h_{n-1},h_n,Y_n), n \geq 0\}$ to be a
general-state Markov chain, we have $Y_n=g((h_{n-1},h_n,Y_n))$ as a general
HMM. 
Next, we  consider the simple case, 
cf.\ \cite{Chung_Kastner_Dinh_Goel_Courville_Bengio_2015}, in which the generating
distribution is conditioned on $h_{n-1}$ such that
\begin{eqnarray}\label{6.8a}
Y_n = \mu_{y,n} + \sigma_{y,n}  \varepsilon_n,
\end{eqnarray}
where   $\sigma_{y,n} > 0$ P-a.s., $\varepsilon_n \sim N(0,1)$ is a sequence of
i.i.d.\ random variables, and $\varepsilon_n$ is independent of $\{Y_{n-k}, k
\geq 1\}$ for all $n$. Here, we assume that $\mu_{y,n}$ and $\sigma^2_{y,n}$
are the parameters of the generating distribution such that $(\mu_{y,n},
\sigma^2_{y,n})\sim g_{\gamma}(h_{n-1}), $ with $g_{\gamma}$ any highly
flexible function such as a neural network.

To illustrate the general HMM approach, we consider two examples: linear RNN
and LSTM.
To start with, we consider the  linear RNN. Let $Y_n$ be the output model in
(\ref{6.8a}); the linear RNN updates its hidden state using the following
recurrence equation:
\begin{eqnarray}\label{RNN5a}
h_n = f_\theta( Y_n,  h_{n-1}) = \delta_0 +  \delta_1 h_{n-1} +  \delta_2 Y_{n},
\end{eqnarray}
where $\theta= (\delta_0,\delta_1,\delta_2)$ with $\delta_0 > 0,~ \delta_1 >
0,$  and $\delta_2 > 0$ constants. 


For an explicit representation, we analyze  the linear RNN (\ref{6.8a}) and
(\ref{RNN5a}) as follows:
let $Z_n=(h_{n-1}, h_{n},Y_{n})$ be the Markov chain on ${\cal X}:=({\bf R}
\times {\bf R} \times {\bf R})$. Denote $\eta_n = h_{n-1}^{-1} Y_{n}$ and let
$\tau_n = (\delta_1 + \delta_2 \eta_n) \in {\bf R}$.  Let $A_n$ be a $ 3 \times
3$ matrix written  as	
\begin{equation}\label{6.10}
A_{n} = \left[ \begin{array}{ccc}
0 & 1 & 0 \\
0 & \tau_{n}& 0 \\
0 & \eta_{n} & 0 \\ \end{array} \right].
\end{equation}

Note that although $\{A_n, n \geq 0 \}$ are random matrices driven by the
Markov chain $\{Z_n, n \geq 0\}$, since the randomness of $Y_n$ comes
only from the i.i.d.\ random variables $\varepsilon_n$, and since $Y_n$ is independent
of ${\cal F}_{n-1}$, the $\sigma$-algebra generated by $Y_1,\cdots,Y_{n-1}$,
$\{A_n, n \geq 0 \}$ are i.i.d.\ random matrices. 

Let ${\xi}_n=(0,\delta_0,0)^t \in {\bf R}^{3}$.
Then we have the following linear state space representation of the linear RNN
(\ref{6.8a}) and (\ref{RNN5a}): $Z_n$ is a Markov chain governed by
\begin{equation}\label{6.11}
Z_{n} = A_{n} Z_{n-1} + \xi_{n},
\end{equation}
and $Y_n := g(Z_n)$, the observed random quantity, is a non-invertible function
of $Z_n$. 

It is easy to check that the stability condition holds if ${\mathbb E}_{\Pi}
\frac{Y_1}{h_{1}} < 1$ and $\delta_1 + \delta_2  {\mathbb E}_{\Pi}
\frac{Y_1}{h_{1}} < 1$, where $\Pi$ is the stationary distribution of the
Markov chain $\{(Z_n, A_n \cdots A_1), n \geq 0\}$. 
The moment conditions  hold under the normality assumption in (\ref{6.8a}).
Note that $Z_n$ defined in (\ref{6.11}) is a $V$-uniformly ergodic Markov chain
with $V(z)=\|z\|^2$, cf.\ Theorem~16.5.1 of \cite{Meyn_Tweedie_2009}. By using
the results in \cite{Bougerol_Picard_1992}, it is straightforward to check that the
stability condition and conditions C1 and~C3 hold.
By an argument similar to that in (\ref{4.3b}), C2 holds.

Next, we consider the  LSTM network, cf.\ \cite{Hochreiter_Schmidhuber_1997}. By
using (\ref{6.8a}) as the output model for $Y_n$, we consider the hidden unit
as follows. The state is a pair of vectors $s = (c, h) \in {\bf R}^{2d}$, and
the model is parameterized by eight matrices, 
$W_{\triangle} \in {\bf R}^{2d}$ and $U_{\triangle} \in {\bf R}^{d \times n}$,
for $\triangle \in \{i,f,\sigma,z\}$. The state-transition map $\phi_{\mathit{LSTM}}$
for $f_\theta$ in (\ref{RNN5a}) is defined as 
\begin{eqnarray}
&~& f_t = \sigma(W_f h_{t-1} + U_f y_t),~~~~~~ i_t = \sigma(W_i h_{t-1} + U_i y_t), ~~~o_t = \sigma(W_o h_{t-1} + U_o y_t), 
\label{LSTM} \\
&~& z_t = \tanh(W_z h_{t-1} + U_z y_t),~~~ c_t = i_t \circ z_t + f_t \circ c_{t-1}, ~~~ h_t =o_t \cdot \tanh(c_t), \nonumber
\end{eqnarray}
where $\circ$ denotes elementwise multiplication, and $\sigma$ is the logistic
function.

Let $Z_n=(h_{n-1}, h_{n},Y_{n})$ be the Markov chain defined in  (\ref{6.8a})
and (\ref{RNN5a}).
	To provide conditions under which the $r$-step iterated system
	$\phi^r_{\mathit{LSTM}} = \phi_{\mathit{LSTM}} \circ \cdots \circ  \phi_{\mathit{LSTM}}$ is stable, we
	denote $\|W\|_\infty$ as the induced $\ell_\infty$ matrix norm, which
	corresponds to the maximum absolute row sum $\max_i \sum_j |W_{ij}|$, and
	let $E\|f\|_\infty = \sup_t E\|f_t\|_\infty$. 
	Since $\sigma < 1$ for given any weights $W_f;~U_f$ and inputs $y_t$, we
	have $E\|f\|_\infty < 1$. 
	This means the next state $c_t$ must ``forget'' a non-trivial portion of
	$c_{t-1}$. 
	We leverage this phenomenon to give sufficient conditions for $\phi_{\mathit{LSTM}}$
	to be
	contractive in the $\ell_\infty$ norm, which in turn implies the  system
	$\phi_{\mathit{LSTM}}^r$
	is contractive in the $\ell_2$ norm for
	$r = O(\log d)$. 

	By using the mean contraction under the normal distribution defined in
	(\ref{6.8a}), the following result is taken from Proposition~1 of
	\cite{Fuh_2021b}, in which he shows that the iterated function system
	$\phi_{\mathit{LSTM}}^r$ is stable; see also Proposition~2 in
	\cite{Miller_Hardt_2019} for deterministic LSTM. 
	\begin{proposition}\label{pr1}
		$\|W_f\|_\infty < B_w< \infty$, $\|U_f\|_\infty < B_u< \infty$,
		$\|Y_t\|_\infty \leq B_Y,~P$-$a.s$ for some random variable $B_Y$ with $E
		B_Y< \infty$. Moreover, assume
		$\|W_i\|_\infty < (1 - E\|f\|_\infty)$, $\|W_o\|_\infty < (1 - E\|f\|_\infty)$, 
		$\|W_z\|_\infty < (1/4)(1 - E\|f\|_\infty)$, $\|W_f\|_\infty < (1 - E\|f\|_\infty)^2$,
		and $r = O(\log d)$; then the iterated function system 
		$\phi_{\mathit{LSTM}}^r$
		is stable.
	\end{proposition}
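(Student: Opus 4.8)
The plan is to follow the two-layer contraction strategy announced just before the statement: first show that a single application of $\phi_{\mathit{LSTM}}$ contracts the state in the $\ell_\infty$ norm (in the mean), and then lift this to a genuine $\ell_2$ contraction of the $r$-fold composite $\phi^r_{\mathit{LSTM}}$, paying a dimension factor that is exactly what forces $r = O(\log d)$. Throughout I would use only two elementary Lipschitz facts, that the logistic map $\sigma$ is $\tfrac14$-Lipschitz and $\tanh$ is $1$-Lipschitz, together with the range bounds $\|f_t\|_\infty,\|i_t\|_\infty,\|o_t\|_\infty < 1$ and $\|z_t\|_\infty,\|\tanh(c_t)\|_\infty < 1$ that hold coordinatewise by the definition of the gates in (\ref{LSTM}). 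In particular $E\|f\|_\infty < 1$ follows at once since $\sigma < 1$, and the moment hypotheses $\|W_f\|_\infty < B_w$, $\|U_f\|_\infty < B_u$, $\|Y_t\|_\infty \le B_Y$ with $EB_Y < \infty$ guarantee that all expectations below are finite.

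The first step is a stationary moment bound on the cell state. From $c_t = i_t\circ z_t + f_t\circ c_{t-1}$ with $\|i_t\circ z_t\|_\infty < 1$, the recursion for $\|c_t\|_\infty$ is itself a mean contraction with unit-bounded forcing and contraction factor $\kappa := E\|f\|_\infty$, so in the invariant regime $E\|c\|_\infty \le (1-\kappa)^{-1}$. This bound is precisely what controls the term in which a perturbation of $h_{t-1}$ propagates through the forget gate onto the (large) factor $c_{t-1}$, and it is the reason the \emph{squared} threshold $\|W_f\|_\infty < (1-\kappa)^2$ enters.

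The second step is the one-step Lipschitz estimate. Writing $\Delta c, \Delta h$ for the differences of two states driven by the same input $Y_t$, the product inequality $\|u\circ v - u'\circ v'\|_\infty \le \|u\|_\infty\|v-v'\|_\infty + \|v'\|_\infty\|u-u'\|_\infty$, applied first to $c_t$ and then to $h_t = o_t\circ\tanh(c_t)$ and combined with the gate Lipschitz constants and the bound $E\|c_{t-1}\|_\infty \le (1-\kappa)^{-1}$, yields an entrywise inequality $(\,\|\Delta c_t\|_\infty,\ \|\Delta h_t\|_\infty\,)^t \le M\,(\,\|\Delta c_{t-1}\|_\infty,\ \|\Delta h_{t-1}\|_\infty\,)^t$ with $M \ge 0$. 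The four weight-norm hypotheses are calibrated so that the rows of $M$ read $(\kappa,\ \tfrac34(1-\kappa))$ and $(\kappa,\ 1-\kappa)$; then $\mathrm{trace}\,M = 1$ and $\det M = \tfrac14\kappa(1-\kappa)$, so the dominant eigenvalue is $\rho(M) = \tfrac12\big(1 + \sqrt{1-\kappa(1-\kappa)}\big) < 1$ for every $\kappa\in(0,1)$. Iterating the entrywise inequality $n$ times (using $M\ge 0$) then gives geometric decay of $E\|\phi^n_{\mathit{LSTM}}(s) - \phi^n_{\mathit{LSTM}}(s')\|_\infty$ at rate $\rho(M)^n$.

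The final step converts this $\ell_\infty$ mean contraction into an $\ell_2$ one. On $\mathbb R^{2d}$ one has $\|x\|_\infty \le \|x\|_2 \le \sqrt{2d}\,\|x\|_\infty$, so the $n$-step factor in $\ell_2$ is at most $\sqrt{2d}\,\rho(M)^n$, which drops below $1$ as soon as $n > \tfrac{\log(2d)}{2\log(1/\rho(M))}$; choosing $r = O(\log d)$ makes $\phi^r_{\mathit{LSTM}}$ a strict mean contraction in $\ell_2$, and stability follows from the iterated-function-system criterion of \cite{Bougerol_Picard_1992} as implemented in \cite{Fuh_2021b}. The main obstacle is that the whole argument is only a \emph{mean} contraction: the forget factor $\|f_t\|_\infty$ is random, so one must take expectations under the stationary law while simultaneously controlling $\|c_{t-1}\|_\infty$ there, and because the second row of $M$ has row sum exactly $1$ the naive one-step operator-norm bound fails — one is forced to argue through the spectral radius $\rho(M)$ rather than through $\|M\|_\infty$, which is precisely what the passage to the $r$-step iterate accomplishes.
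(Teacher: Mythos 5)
First, a point of comparison: the paper does not actually prove this proposition --- it is imported verbatim, with the text stating that ``the following result is taken from Proposition~1 of \cite{Fuh_2021b}\dots see also Proposition~2 in \cite{Miller_Hardt_2019} for deterministic LSTM.'' Your proposal reconstructs precisely the contraction argument used in those cited sources: the $\tfrac14$-Lipschitz bound for $\sigma$, the $1$-Lipschitz bound for $\tanh$, the stationary bound $\|c\|_\infty \le (1-\kappa)^{-1}$ with $\kappa = E\|f\|_\infty$ (which is indeed what makes the squared threshold $\|W_f\|_\infty < (1-\kappa)^2$ the right one), the $2\times 2$ comparison matrix $M$ with rows $(\kappa,\ \tfrac34(1-\kappa))$ and $(\kappa,\ 1-\kappa)$, and the norm-equivalence factor $\sqrt{2d}$ that forces $r = O(\log d)$. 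I checked the row entries and the spectral computation ($\operatorname{trace} M = 1$, $\det M = \tfrac14\kappa(1-\kappa)$, $\rho(M) = \tfrac12(1+\sqrt{1-\kappa(1-\kappa)}) < 1$) and they are correct. So in spirit your route is ``the same as the paper's,'' in the only sense available.

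There is, however, one genuine gap that you name but do not close, and it is exactly where the stochastic version differs from the deterministic one in \cite{Miller_Hardt_2019}. Your key inequality $\|f_t \circ \Delta c_{t-1}\|_\infty \le \|f_t\|_\infty\,\|\Delta c_{t-1}\|_\infty$ is pathwise, but to land in the matrix recursion with entry $\kappa = E\|f\|_\infty$ you must take expectations of the \emph{product}. Since $f_t = \sigma(W_f h_{t-1} + U_f y_t)$ depends on the very state whose perturbation you are tracking, $\|f_t\|_\infty$ and $\|\Delta c_{t-1}\|_\infty$ are not independent, and conditioning on $\mathcal F_{t-1}$ only reduces the problem to bounding $E\bigl[\|f_t\|_\infty \mid h_{t-1}\bigr]$ \emph{uniformly over states} $h_{t-1}$ --- whereas the hypothesis $E\|f\|_\infty = \sup_t E\|f_t\|_\infty$ is an unconditional expectation along the trajectory and does not by itself dominate the conditional one. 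The same issue affects the stationary bound $E\|c\|_\infty \le (1-\kappa)^{-1}$, where $E[\|f_t\|_\infty\|c_{t-1}\|_\infty]$ again cannot be factored. To repair this you need either a uniform-in-state version of the forget-gate bound, $\sup_{h}\,E\bigl[\|\sigma(W_f h + U_f y_t)\|_\infty\bigr] \le \kappa$ (which is what the moment hypotheses $\|W_f\|_\infty < B_w$, $\|U_f\|_\infty < B_u$, $\|Y_t\|_\infty \le B_Y$ with $EB_Y < \infty$, together with $\|h\|_\infty < 1$, are evidently there to deliver), or a pathwise contraction with a random factor followed by a law-of-large-numbers / Lyapunov-exponent argument as in \cite{Bougerol_Picard_1992}. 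Flagging the obstacle is not the same as resolving it; as written, the step from the pathwise product bound to the mean recursion $E\|\Delta c_t\|_\infty \le \kappa\,E\|\Delta c_{t-1}\|_\infty + \dots$ does not follow.
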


Under this assumption, the state space of the Markov chain
$\{Z_n=(h_{n-1},h_n,Y_n), n \geq 0\}$ defined in the  LSTM model  (\ref{LSTM})
is compact, 
and hence is Harris recurrent and satisfies the $V$-uniformly ergodic assumptions C1. 
Under the normality assumption in  (\ref{6.8a}), it is easy to see that $E
|\varepsilon_1|^{p} < \infty$  for any $p >0$. Therefore the moment conditions
of C1 and C3 hold. 
The contraction property C2 holds due to the definition of the activation
functions. C4 holds as the state space of the Markov chain is compact.
By Theorem~\ref{thm3.1} and Proposition~\ref{Proposition1}, 
we prove the existence of the R\'{e}nyi divergence and Kullback--Leibler
divergence, and provide a characterization.
As numerical computations seem difficult, we will rely on Monte-Carlo
simulations.

\section{Computational Issues in General HMM}\label{sec6}
\def\theequation{5.\arabic{equation}}
\setcounter{equation}{0}

Since the R\'{e}nyi divergence in general HMM involves an eigenvalue  which is
difficult to compute, we provide an approximated
R\'{e}nyi divergence in Section~\ref{KL}. Next we present a theoretical
background of the invariant measure for $J_p^\alpha$ in Section~\ref{5.2},
and report numerical computation of the R\'{e}nyi divergence in Section~\ref{Sim}.

\subsection{Approximated R\'{e}nyi Divergence}\label{KL}

Under conditions C1--C4, by Theorem~\ref{thm3.1} the R\'{e}nyi divergence
between $P$ and $Q$ is 
\begin{eqnarray}\label{RenyiDef}
D_\alpha(p||q):= \lim_{n\to \infty} \frac{1}{n} D^n_\alpha(p^{(n)}||q^{(n)}) = \frac{1}{\alpha - 1} \log \lambda (\alpha).
\end{eqnarray}

Note that the computation of the R\'{e}nyi divergence based on (\ref{RenyiDef})
involves the computation of the largest eigenvalue $\lambda (\alpha)$, which
is not an easy task. Hence, instead of using (\ref{RenyiDef}), we will provide
an alternative approach based on the recursive formula as follows.
Recall that from (\ref{renyi}), $D_\alpha(p||q)$ is defined as 
\begin{eqnarray}\label{renyi2}
D_\alpha(p||q):= \lim_{n \to \infty} \frac{1}{n} D^n_\alpha(p^{(n)}||q^{(n)}) &=& \frac{1}{\alpha - 1} \lim_{n \to \infty} \frac{1}{n}  \log {\mathbb E}_p \bigg[ \bigg(\frac{ p_n(Y_0, Y_1,\cdots,Y_n)}{ q_n(Y_0, Y_1,\cdots,Y_n)}\bigg)^{\alpha-1} \bigg],
\end{eqnarray}
where $ {\mathbb E}_p $ denotes the expectation according to $ {\mathbb P}_p $,
the probability of the Markov chain $\{W_n, \geq 0\}$ when the probability of
$\{Y_0,Y_1,\cdots, Y_n \}$ is under $p_n$.

Now, we seek to show that (\ref{renyi2}) can be computed via iterations. In
particular, we have
\begin{eqnarray}\label{5.2a}
p_n(y_0,y_1,\ldots,y_n)=C_{n,1}(p) + \ldots+C_{n,d}(p),
\end{eqnarray}
where 
\begin{eqnarray}\label{5.2b}
C_{t,j}(p)=f(y_t|x_{t-1}= j,y_{t-1})\sum_{s=1}^dp_{sj} C_{t-1,s}(p),
\end{eqnarray}
with initial values $C_{1,j}(p)=\pi_j(p)f(y_1|x_0=j)$, for $j=1,\ldots,d$. The
following lemma
gives an iteration method to compute the probability.
\begin{lemma}\label{L2}
Under conditions C1--C4, the probability can be calculated by
	\begin{align}\label{MSMLogLL}
	(p_n (y_0,y_1,\dots,y_n))^{\alpha-1}= \prod_{t=1}^n (C^*_{t,1}(p)+\ldots+C^*_{t,d}(p))^{\alpha-1},
	\end{align}
	where
	$$C^*_{t,j}(p)=\frac{f(y_t|x_{t-1}= j,y_{t-1})\sum_{s=1}^dp_{sj} C_{t-1,s}(p)}{\sum_{s=1}^dC_{t-1,s}(p)},$$ 
	with initial values $C^*_{1,j}(p)=\pi_j(p)f(y_1|x_0=j)$, for $j=1,\ldots,d$.
\end{lemma}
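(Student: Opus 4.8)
The plan is to recognize that \eqref{MSMLogLL} is a purely algebraic telescoping identity sitting on top of the forward recursion \eqref{5.2a}--\eqref{5.2b}, with conditions C1--C4 entering only to guarantee that the normalizing denominators introduced below are strictly positive and finite. First I would introduce the unnormalized column sums
\[
D_t := \sum_{j=1}^d C_{t,j}(p), \qquad t=1,\dots,n,
\]
and adopt the convention $D_0 := 1$. By \eqref{5.2a} we already have $D_n = p_n(y_0,y_1,\dots,y_n)$, so it suffices to show that the product on the right-hand side of \eqref{MSMLogLL} collapses to $D_n$ before the exponent $\alpha-1$ is applied.

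The key observation is that $C^*_{t,j}(p)$ is nothing but $C_{t,j}(p)$ divided by $D_{t-1}$. Indeed, comparing the definition of $C^*_{t,j}(p)$ in the statement with \eqref{5.2b}, its numerator is exactly $C_{t,j}(p)$ and its denominator is $\sum_{s=1}^d C_{t-1,s}(p)=D_{t-1}$, so $C^*_{t,j}(p)=C_{t,j}(p)/D_{t-1}$ for $t\ge 2$. Summing over $j$ then yields
\[
\sum_{j=1}^d C^*_{t,j}(p) = \frac{\sum_{j=1}^d C_{t,j}(p)}{D_{t-1}} = \frac{D_t}{D_{t-1}}, \qquad t=2,\dots,n,
\]
while for $t=1$ the initial condition $C^*_{1,j}(p)=\pi_j(p)f(y_1|x_0=j)=C_{1,j}(p)$ gives $\sum_j C^*_{1,j}(p)=D_1$, consistent with $D_0=1$.

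Then I would form the product and telescope:
\[
\prod_{t=1}^n \Big( \sum_{j=1}^d C^*_{t,j}(p) \Big) = D_1\cdot\frac{D_2}{D_1}\cdot\frac{D_3}{D_2}\cdots\frac{D_n}{D_{n-1}} = D_n = p_n(y_0,y_1,\dots,y_n).
\]
Raising both sides to the power $\alpha-1$ and distributing the exponent across the finite product gives precisely \eqref{MSMLogLL}, which completes the argument. This is, in effect, the classical scaled forward (filtering) recursion for HMMs, with $C^*_{t,j}(p)$ playing the role of the normalized forward variables.

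There is no deep obstacle here; the only points requiring care are the index bookkeeping that distinguishes $C_{t,j}(p)$ from its normalized counterpart $C^*_{t,j}(p)$, and the separate treatment of the base case $t=1$ via the initial condition. The role of conditions C1--C4 is auxiliary: C2 forces $f(y|j,y_0)>0$, and the positivity of the initial distribution and of the transition probabilities forces each $C_{t,j}(p)>0$, so that every $D_{t-1}$ is strictly positive (hence the ratios defining $C^*_{t,j}(p)$ are well defined) and finite, making the telescoping legitimate term by term.
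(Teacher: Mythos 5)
Your proof is correct and is essentially the paper's own argument: the paper writes the product as an exponential of a telescoping sum of logarithms of the ratios $\sum_j C_{t,j}(p)\big/\sum_s C_{t-1,s}(p)$, which is exactly your multiplicative telescoping $D_1\cdot(D_2/D_1)\cdots(D_n/D_{n-1})=D_n$ expressed additively in logs. Your explicit handling of the base case $t=1$ and the positivity of the denominators is a slightly more careful bookkeeping of the same idea.
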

\begin{proof} 
By (\ref{5.2a}) and (\ref{5.2b}), and using induction for $n=1,2,\cdots,$ it is
easy to show that
\begin{align*}
&\prod_{t=1}^n (C^*_{t,1}(p)+\ldots+C^*_{t,d}(p))^{\alpha-1} \\
=& \exp \big\{ (\alpha-1)\sum_{t=1}^n \log (C^*_{t,1}(p)+\ldots+C^*_{t,d}(p)) \big\} \\
=& \exp \bigg\{ (\alpha-1) \bigg[ \sum_{t=1}^n   \log \big( \sum_{j=1}^d  C_{t,j}(p) \big) -  \sum_{t=1}^n \log \big( \sum_{s=1}^dC_{t-1,s}(p)\big) \bigg] \bigg\} \\
=& \exp \bigg\{ (\alpha-1)    \log \big( \sum_{j=1}^d  C_{n,j}(p) \big)  \bigg\} =   
 \big( \sum_{j=1}^d  C_{n,j}(p) \big)^{\alpha-1 }  = (p_n (y_0,y_1,\dots,y_n))^{\alpha-1}.
\end{align*}
\end{proof}
\begin{remark}
Note that $C^*_{t,j}(p)$ in (\ref{MSMLogLL}) generally have
finite means, and thus $C^*_{n,j}(p)$ increase or decrease linearly in $n$.
Hence, Lemma~\ref{L2} provides an algorithm that is computationally feasible
when the time step is large. In addition, it is easy to see from
(\ref{MSMLogLL}) that the ``normalization'' of (\ref{5.2a}) reflects the idea of
using the projection space $P({\bm R}^d)$ defined in (\ref{2.9}) and
(\ref{2.10}).
\end{remark}
	
	Denote
	\begin{eqnarray}\label{jalpha}
	J^\alpha
	= {\mathbb E}_{\Pi}\Big[{\mathbb E}_{\Pi}\bigg(\frac{C_{t,0}^*(p)+C_{t,1}^*(p)} {C_{t,0}^*(q)+C_{t,1}^*(q)}\bigg)^{\alpha-1} \big| X_{t-1},Y_{t-2},W_{t-1} \big) \Big].
	\end{eqnarray}
	By (\ref{renyi2}), (\ref{MSMLogLL}), and (\ref{jalpha}) we observe that the R\'{e}nyi
	divergence can be approximated as	
  \begin{eqnarray}\label{RenyiLim}
	 D_\alpha(p||q) 
	&=& \frac{1}{\alpha-1}   \lim_{n\to\infty} \frac{1}{n} \log {\mathbb E}_p \Bigg\{  \frac{ \exp \big\{ (\alpha-1) \sum_{t=0}^n \log\left(\sum_{s=1}^d C^*_{t,s}(p)\right) \big\} }{ \exp \big\{ (\alpha-1) \sum_{t=0}^n \log\left(\sum_{s=1}^dC^*_{t,s}(q)\right) \big\}} \Bigg\},  \\
	&=& \frac{1}{\alpha-1}  \frac{1}{n}  \log \lim_{n\to\infty} {\mathbb E}_p \bigg\{  \exp\bigg\{  \sum_{t=0}^n \log \left( \frac{\sum_{s=1}^d C^*_{t,s}(p)}{\sum_{s=1}^dC^*_{t,s}(q)}   \right)^{\alpha-1}   \bigg\} \bigg\} \nonumber \\
	&\approx& \frac{1}{\alpha-1} \frac{1}{n}  \log  \big(J^\alpha \big)^n = \frac{1}{\alpha-1}   \log  J^\alpha.
	  \nonumber 
	\end{eqnarray}

	\begin{remark}\label{lm5.3}
		Note that here in (\ref{RenyiLim}), we apply the following approximation
		\begin{eqnarray}\label{jp}
	{\mathbb E}_p \bigg\{  \exp\bigg\{ \frac{1}{n} \sum_{t=0}^n \log \left( \frac{\sum_{s=1}^d C^*_{t,s}(p)}{\sum_{s=1}^dC^*_{t,s}(q)}   \right)^{\alpha-1}   \bigg\} \bigg\}
		\approx  J^\alpha.
		\end{eqnarray}
		In other words, we approximate the largest eigenvalue $\lambda(\alpha)$
		via $\lambda^n(\alpha) \approx  (J^\alpha)^n$, which can be explained as
		follows.
		By (\ref{eigenvalue}), we have for~given~$\tau=n$,
		\[	 E_{\nu} \bigg[e^{(\alpha -1) S_n} | \tau=n  \bigg]= E_\nu \bigg[{ e^{\Lambda(\alpha)}}^{n}| \tau=n\bigg] = E_\nu \bigg[\lambda^n(\alpha) | \tau=n \bigg]. \]
		
		Let $T^\alpha_1$ be defined as (\ref{mirfs}) with the form in (\ref{renyi2}). Denote $A_\tau$ as the number of epochs by the regeration time $\tau$; then use $E\tau_1 \cdot A_\tau
		\sim n$ to  approximate 
		\[ (E(T^\alpha_1))^n \approx Ee^{(\alpha -1) S_n} \approx E [e^{(\alpha -1) \sum_{j=1}^{\tau_n} S_{\tau_j}} ] \approx E [ \prod_{j=1}^{\tau_n}  e^{(\alpha - 1) S_{\tau_j}} ]
		\approx  E [ e^{(\alpha -1) S_{\tau_1}} ]^{\tau_n} = \lambda^{E\tau_1 \cdot A_\tau}(\alpha) \approx \lambda^n(\alpha). \]
		In summary, we have more accurate approximation when $\alpha \approx 1$
		or in the `almost i.i.d.' case. 
		In other words, we approximate $\lambda(\alpha)$ by $E(T^\alpha_1)$ via the idea of
		approximating the transition probability by the invariant probability. 
	\end{remark}
	
\subsection{Theoretical Background of the Invariant Measure for $J^\alpha$}\label{5.2}
Since the R\'{e}nyi divergence in general HMM in Theorem~\ref{thm3.1}
involves the largest eigenvalue of the operator defined in the induced Markov
chain $\{W_n, n \geq 0\}$, it is not easy to compute in general. One
standard way to calculate the R\'{e}nyi divergence in general HMM is via Monte
Carlo simulation. Specifically, we first generate $\{Y_t\}_{t=1}^n$ from the
model ${\mathbb P}_p$. Second, we compute
\[	\frac{1}{n}  \log {\mathbb E}_p \bigg[ \bigg(\frac{ p_n(Y_0, Y_1,\cdots,Y_n)}{ q_n(Y_0, Y_1,\cdots,Y_n)}\bigg)^{\alpha-1} \bigg] \]
via (\ref{MSMLogLL}). Then, the R\'{e}nyi divergence can be estimated by
repeating the above procedure several times and averaging its results. Needless to
say, Monte Carlo is time-consuming especially when repeated calculations are needed. In this section, we propose a faster
algorithm to compute the R\'{e}nyi divergence in a two-state Markov switching
model. By (\ref{RenyiLim}), the R\'{e}nyi divergence can be computed as
$D_\alpha(p||q)=\frac{1}{\alpha-1}\log J^{\alpha}$,
where $J^\alpha$ is defined in (\ref{jalpha}). For the case ${\cal X
}=\{0,1\}$, $J^{\alpha}$ can be computed numerically. Before stating the
method, we first note that the invariant measure of $J^{\alpha}$ depends
only on $W_t=C^*_{t,0}(p)/(C^*_{t,0}(p)+C^*_{t,1}(p))\in[0,1]$. The other
important fact is that $W_t$ depends to $X_t$ and $Y_{t-1}$ due to the fact
that $\{(X_t,Y_t,Y_{t-1},W_t), t \geq 0 \}$ is a Markov chain ($Y_{-1}:=0$). To
find the stationary distribution of $W_t$, we define $m_j(\cdot,\cdot)$ as the
stationary density function satisfying
$$\text{Pr}(X_t=j, Y_{t-1}=u,W_t\leq x)=\int_{0}^x m_j(u, w)dw.$$
By extending the argument as in \cite{Fuh_Mei_2015}, the following
proposition characterizes $m_j(\cdot,\cdot)$ via Fredholm integral equations.
Before that, we require the following notation. 

For ease of presentation, we will use $p_{\bm \theta_1}$ for probability under
${\mathbb P}_p$
	and $p_{\bm \theta}$ for a probability under  ${\mathbb P}$ with a parameter
	$\bm \theta$. We will denote
$\varphi_j(\bm \theta)$ as the parameter of the underlying Markov chain $X_t=j$
with parameter $\bm \theta$.
	
Define $z(w,x)$ as
$$z(w,x)=\frac{x}{1-x}\cdot\frac{p_{01}(\bm\theta) w+p_{11}(\bm\theta)(1-w)}{p_{00}(\bm\theta)w+p_{10}(\bm\theta) (1-w)};$$
for $j=0,1$,
$$Q_j(u,z)= P\phantom{}_{\bm \theta_1}\Bigg( \frac{g(Y_t|\varphi_0({\bm \theta}),u)}{g(Y_t| \varphi_1({\bm \theta}),u)}\leq z\Bigg| X_t=j, Y_{t-1}=u\Bigg).$$

By using an argument similar to Theorem~3 of \cite{Fuh_Mei_2015}, we have
\begin{proposition}\label{thm1}
Under conditions C1--C4, for all $0< x< 1$,
	\begin{align}\label{M00}
	m_0(u,x)&=p_{00}(\bm\theta_1)\int_0^1\int_{-\infty}^\infty g(u|\varphi_0({\bm \theta_1}),v)\frac{\partial }{\partial x}Q_0(u,z(w,x))m_0(v,w)dvdw \nonumber \\
	&\qquad+p_{10}(\bm\theta_1)\int_0^1\int_{-\infty}^\infty g(u|\varphi_1({\bm \theta_1}),v)\frac{\partial }{\partial x}Q_0(u,z(w,x))m_1(v,w)dvdw, \nonumber \\ 
	m_1(u,x)&=p_{01}(\bm\theta_1)\int_0^1\int_{-\infty}^\infty g(u|\varphi_0({\bm \theta_1}),v)\frac{\partial }{\partial x}Q_{1}(u,z(w,x))m_0(v,w)dvdw \nonumber \\
	&\qquad+p_{11}(\bm\theta_1)\int_0^1\int_{-\infty}^\infty g(u|\varphi_1({\bm \theta_1}),v)\frac{\partial }{\partial x}Q_{1}(u,z(w,x))m_1(v,w)dvdw.
	\end{align}
\end{proposition}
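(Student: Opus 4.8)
The plan is to read \eqref{M00} as the invariance (fixed-point) equation for the stationary law of the enlarged chain $\{(X_t,Y_{t-1},W_t), t\ge 0\}$, and to derive it by a one-step conditioning argument in the spirit of Theorem~3 of \cite{Fuh_Mei_2015}, modified to accommodate the autoregressive emission densities $g(y_t\mid\varphi_{x_t}(\cdot),y_{t-1})$ of the Markov switching model. First I would make the update map of $W_t$ explicit. Since $C^*_{t,j}=C_{t,j}/\sum_s C_{t-1,s}$ shares a common denominator across $j$, we have $W_t=C_{t,0}/(C_{t,0}+C_{t,1})$, and substituting the recursion of Lemma~\ref{L2} and dividing through by $C_{t-1,0}+C_{t-1,1}$ gives
\[
\frac{W_t}{1-W_t}=\frac{g(Y_t\mid\varphi_0(\theta),Y_{t-1})}{g(Y_t\mid\varphi_1(\theta),Y_{t-1})}\cdot\frac{p_{00}(\theta)W_{t-1}+p_{10}(\theta)(1-W_{t-1})}{p_{01}(\theta)W_{t-1}+p_{11}(\theta)(1-W_{t-1})}.
\]
Because $x\mapsto x/(1-x)$ is strictly increasing on $[0,1)$, the event $\{W_t\le x\}$ is equivalent to $\{g(Y_t\mid\varphi_0(\theta),Y_{t-1})/g(Y_t\mid\varphi_1(\theta),Y_{t-1})\le z(W_{t-1},x)\}$, with $z$ exactly the function defined before the proposition. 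This algebraic reduction is what produces both $z(w,x)$ and the conditional law $Q_j(u,\cdot)$, and it is the step that ties the abstract weight $W_t$ to a likelihood ratio.

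Next I would invoke conditions C1--C4 through Propositions~\ref{pro1} and~\ref{pro2}: Harris recurrence and $V$-uniform ergodicity (Remark~\ref{rm6}) guarantee a unique invariant probability for $\{(X_t,Y_t,Y_{t-1},W_t)\}$, whose marginal on $(X_t,Y_{t-1},W_t)$ I take to admit the density $m_j(u,w)$ against (counting on $\{0,1\}$)$\times$(Lebesgue)$\times$(Lebesgue). I would then write, for the target state $X_t=0$,
\[
\Pr(X_t=0,\,Y_{t-1}\in du,\,W_t\le x)=\sum_{i=0}^1\int_0^1\!\!\int_{-\infty}^\infty \Pr\big(X_t=0,\,Y_{t-1}\in du,\,W_t\le x\mid X_{t-1}=i,\,Y_{t-2}=v,\,W_{t-1}=w\big)\,m_i(v,w)\,dv\,dw,
\]
having decomposed over $X_{t-1}=i\in\{0,1\}$ and integrated against the stationary density $m_i(v,w)\,dv\,dw$ of $(Y_{t-2},W_{t-1})$ on $\{X_{t-1}=i\}$.

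The crux is to factorize the one-step conditional probability using the Markov and conditional-independence structure (\ref{2.1})--(\ref{2.2}). Conditionally on $X_{t-1}=i$ the regime transition contributes $p_{i0}(\theta_1)$; conditionally on $(X_{t-1}=i,Y_{t-2}=v)$ the increment $\varepsilon_{t-1}$ makes $Y_{t-1}=u$ have density $g(u\mid\varphi_i(\theta_1),v)$, independent of $W_{t-1}$; and conditionally on $(X_t=0,Y_{t-1}=u,W_{t-1}=w)$ the only remaining randomness in $W_t$ is carried by $Y_t\sim g(\cdot\mid\varphi_0(\theta_1),u)$, so by the threshold reformulation $\Pr(W_t\le x\mid X_t=0,Y_{t-1}=u,W_{t-1}=w)=Q_0(u,z(w,x))$. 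Substituting these three factors, differentiating in $x$, and dividing by $du$ yields the first line of \eqref{M00}; repeating with $X_t=1$ (replacing $p_{i0},Q_0$ by $p_{i1},Q_1$) gives the second.

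I expect the main obstacle to be analytic rather than structural. Two points need care: (i) justifying that, conditionally on $(X_t=0,Y_{t-1}=u,W_{t-1}=w)$, the law of $W_t$ is exactly $Q_0(u,z(w,x))$ --- this rests on the conditional independence of $Y_t$ from $(X_{t-1},Y_{t-2})$ given $(X_t,Y_{t-1})$ and on the absolute continuity of the likelihood-ratio distribution, so that $Q_j(u,\cdot)$ is differentiable; and (ii) interchanging $\partial/\partial x$ with the double integral, which requires a dominated-convergence bound uniform in $(v,w)$, supplied by the finiteness and positivity of the emission densities in C2--C3. The remaining subtlety is purely one of existence: one must know a genuine stationary density $m_j$ exists, not merely a stationary law, which is where the ergodicity secured by Proposition~\ref{pro1} is used; uniqueness of the solution to the resulting Fredholm system, if desired, would follow from the same irreducibility together with a contraction estimate on the integral operator.
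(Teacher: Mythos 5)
Your proposal is correct and follows essentially the same route as the paper: the paper proves Proposition~\ref{thm1} only by reference to Theorem~3 of \cite{Fuh_Mei_2015}, but the Appendix carries out exactly your argument (one-step decomposition of the stationary law of $(X_t,Y_{t-1},W_t)$ over the previous state, factorization into $p_{ij}(\bm\theta_1)$, the emission density $g(u|\varphi_i(\bm\theta_1),v)$, and the conditional distribution function $Q_j(u,z(w,x))$ obtained from the likelihood-ratio reformulation of $\{W_t\le x\}$, followed by differentiation in $x$) for the extended model~(\ref{example}). Your explicit derivation of the update map $W_t/(1-W_t)$ and of $z(w,x)$ matches the paper's definitions, so no further comment is needed.
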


\begin{remark}
	The key observation is that for a Markov switching model with a finite
	number of $d$-hidden states, the invariant measure can essentially be
	defined by  $d$-functions whose ranges are in the $(d-1)$-dimensional space.
	This is computationally challenging for $d \geq 3$, and numerically
	computationally feasible for $d=2$, as the corresponding two-dimensional
	real-valued functions can be characterized by a two-dimensional Fredholm
	integral equation. Note that the Fredholm integral equation is well studied in
	mathematics, and the two-dimensional case can be numerically solved by
	discretizing and then finding the eigenvector of a (large) square matrix
	with respect to the eigenvalue. 
	
\end{remark}

To illustrate the usefulness of Proposition~\ref{thm1}, we consider the
following Markov switching regression model, in which the mean depends on
$X_t$. A more general case of the means $\mu_{X_t}$ and $\mu_{X_{t-1}}$ is in
Section~\ref{Sim} for the numerical computation.

\begin{example}
Let	$\{X_n, n \geq 0\}$ be a two-state ergodic (aperiodic, irreducible, and
positive recurrent) Markov chain with transition probability matrix 
${P}_{\bm \theta}=\begin{pmatrix}
p_{00} & p_{01} \\
p_{10} &  p_{11}
\end{pmatrix}.$ Denote
	\begin{eqnarray}\label{5.6}
	Y_t=\mu_{X_t}+\psi_{X_t}Y_{t-1}+\sigma_{X_t}\epsilon_t,
	\end{eqnarray}
	where $\epsilon_t\sim N(0,1)$. Denote ${ \bm \theta} =(p_{00}, p_{11},
	\mu_0, \mu_1,  \psi_0, \psi_1, \sigma_0,\sigma_1)$. 
	To compute $Q_j(u,z)$ in (\ref{5.6}), note that
	$$\frac{g(Y_t|\varphi_0({\bm\theta}),u)}{g(Y_t|\varphi_1({\bm\theta}),u)}=\frac{\sigma_1}{\sigma_0}\exp\left\{ \zeta \left[ \left( Y_t+\frac{\eta}{\zeta} \right)^2 -\frac{\eta^2}{\zeta^2} +\frac{\nu}{\zeta} \right] \right\},$$
	where
	$$\zeta=\frac{1}{2\sigma_1^2}-\frac{1}{2\sigma_0^2},
	\ 
	\eta=\frac{\mu_0+\psi_0 u}{2\sigma_0^2}-\frac{\mu_1+\psi_1 u}{2\sigma_1^2}
	\ \text{and}\ 
	\nu=-\frac{(\mu_0+\psi_0u)^2}{2\sigma_0^2}+\frac{(\mu_1+\psi_1u)^2}{2\sigma_1^2}.$$
	It follows that $Q_j(u,z)$ can be computed as
	$$Q_j(u,z)=\Pr \Bigg( \Xi\leq \frac{1}{\zeta}\log\left(\frac{\sigma_0}{\sigma_1}z\right)+\frac{\eta^2}{\zeta^2}-\frac{\nu}{\zeta}\Bigg),$$
	where $\Xi\sim\sigma_j^2\chi^2_1((\mu_j+\psi_ju+\eta/\zeta)^2).$

Once we can approximate the stationary density $m_j(\cdot,\cdot)$ in
Proposition~\ref{thm1}, $J^\alpha$ can be computed as follows and the R\'{e}nyi
divergence can be estimated.

\begin{align}\label{Jalpha}
J^\alpha
&= \mathbb{E}_{\Pi}\Big[ \mathbb{E}_{\Pi}\big(\exp(\log \bigg(\frac{C_{t,0}^*(\bm\theta_1)+C_{t,1}^*(\bm\theta_1)}{C_{t,0}^*(\bm\theta)+C_{t,1}^*(\bm\theta)}\bigg)^{\alpha-1}) \big| X_{t-1},Y_{t-2},W_{t-1} \big) \Big]\\
&=  \mathbb{E}_{\Pi}\Big[ \mathbb{E}_{\Pi}\bigg(\frac{C_{t,0}^*(\bm\theta_1)+C_{t,1}^*(\bm\theta_1)}{C_{t,0}^*(\bm\theta)+C_{t,1}^*(\bm\theta)}\bigg)^{\alpha-1} \big| X_{t-1},Y_{t-2},W_{t-1} \big) \Big] \nonumber \\
&=  \sum_{j=0}^1\int_0^1\int_{-\infty}^\infty\sum_{i=0}^1 \mathbb{P}_{\text{inv}}(X_t=j|X_{t-1}=i)
\nonumber \\
&\qquad\cdot\mathbb{E}_{\Pi}\bigg(\frac{C_{t,0}^*(\bm\theta_1)+C_{t,1}^*(\bm\theta_1)}{C_{t,0}^*(\bm\theta)+C_{t,1}^*(\bm\theta)}\bigg)^{\alpha-1} \big| X_t=j, X_{t-1}=i, Y_{t-2}=v, W_{t-1}=w \big) m_i(v,w) dvdw \nonumber \\
&= \int_0^1\int_{-\infty}^\infty\Big[p_{00}(\bm\theta_1)G^\alpha_{00}(v,w)+p_{01}(\bm\theta_1)G^\alpha_{01}(v,w)\Big]
m_0(v,w) dvdw \nonumber \\
&\qquad+ \int_0^1\int_{-\infty}^\infty\Big[p_{10}(\bm\theta_1)G^\alpha_{10}(v,w)+p_{11}(\bm\theta_1)G^\alpha_{11}(v,w)\Big]m_1(v,w) dvdw, \nonumber
\end{align}
where 
\begin{align}\label{Gij}
&~G^\alpha_{ij}(v,w)=\text{E}_{\Pi}\bigg(\frac{C_{t,0}^*(\bm\theta_1)+C_{t,1}^*(\bm\theta_1)}{C_{t,0}^*(\bm\theta)+C_{t,1}^*(\bm\theta)}\bigg)^{\alpha-1} \big| X_t=j, X_{t-1}=i, Y_{t-2}=v, W_{t-1}=w \big)\\
&=\int_{-\infty}^\infty\int_{-\infty}^\infty  
\Big( \frac{[p_{00}(\bm\theta_1)w+p_{10}(\bm\theta_1)(1-w)]g(y|\varphi_0(\bm\theta_1),u)  + [p_{01}(\bm\theta_1) w+p_{11}(\bm\theta_1)(1-w)]g(y|\varphi_1(\bm\theta_1),u)}{[p_{00}(\bm\theta)w+p_{10}(\bm\theta)(1-w)]g(y|\varphi_0(\bm\theta),u) + [p_{01}(\bm\theta) w+p_{11}(\bm\theta)(1-w)]g(y|\varphi_1(\bm\theta),u)} \Big)^{\alpha-1} \nonumber \\ 
&~~~\cdot g(u|\varphi_i(\bm\theta_1),v)g(y|\varphi_j(\bm\theta_1),u)dudy. \nonumber
\end{align}

Then  
\begin{eqnarray}\label{Renyi}
D_\alpha(p||q)	= \frac{1}{\alpha-1}   \log J^\alpha
=\frac{1}{\alpha-1}   \log J^\alpha.
\end{eqnarray}
\begin{remark}
	We can see that this method involves two parts. The first part is solving
	the eigenvalue to approximate $m_j$, and the second part is using double
	integration to approximate $J^\alpha$. If we carefully design our algorithm,
	we need only a few seconds to calculate the R\'{e}nyi divergence.
\end{remark}
\end{example}

\subsection{Numerical Computation of the R\'{e}nyi Divergence}\label{Sim}

To illustrate our method, in this subsection, we consider the following Markov
switching model: 
\begin{eqnarray}\label{example}
Y_t=\psi_1 \mu_{X_t}+\psi_2 \mu_{X_{t-1}}+\phi Y_{t-1}+\epsilon_t, 
\end{eqnarray}
where $\epsilon_t\sim N(0,\sigma^2)$ and $X_t\in {\cal X} =\{0,1\}$ is a Markov
chain with transition probability matrix 
${P}_\theta=\begin{pmatrix}
p_{00} & p_{01} \\
p_{10} & p_{11}
\end{pmatrix}.$
Denote ${ \bm \theta} =(p_{01}, p_{10}, {\bm \mu}, \phi, \psi_1, \psi_2,
\sigma)$. The derivation of the Markov switching model (\ref{example}) for
numerical study will be given in the Appendix.

We give a summary of our numerical approximation of the invariant measure $\Pi$
in model (\ref{Jalpha}) and (\ref{Gij}). Even though there are several ways to
solve the two-dimensional Fredholm integral equation, the algorithm for solving
the above integral equation should be carefully designed because $\bm 0$ is one
of its solutions. To resolve this problem, we vectorize $m_j$ and take grid
points on $(0,1)\times(-\infty,\infty)$. 
Note that the standard techniques for the general Fredholm integral equation
can be used to estimate $m_0(\cdot,\cdot), m_1(\cdot,\cdot)$ numerically from
equation~(\ref{M00}) in Proposition~\ref{thm1}. The key step is to consider two
discrete approximations:   
$\int_{0}^{1} \int_{-\infty}^{\infty} g(u|\phi(\bm \theta),v) \frac{\partial}{\partial x}
Q(u, z(w,x)) m(v,w) dv dw$ and $\frac{\partial}{\partial x} Q(u, z(w,x)).$
To this end, we first approximate $\int_{-\infty}^{\infty}$ by $\int_{-a}^a$,
and consider the case that $-a=v_0 < v_1< \cdots < v_{N-1} < v_{N} = a$, $0=x_0
< x_1< \cdots < x_{N-1} < x_{N} = 1$ and $0 = w_0 < w_1 < \cdots <w_{N-1} <
w_{N} = 1$. One choice is to simply set $v_i=\frac{2ai}{N}$ and $x_i = w_i =
\frac{i}{N}$ for $i=0,1,\ldots, N.$
Assume $m(a,0) = m(a,1)=m(b,0)=m(b,1)=0.$
Then
\begin{eqnarray*}
&~&	 \int_{0}^{1} \int_{-a}^{a} g(u|\phi(\bm \theta),v) \frac{\partial}{\partial x} Q(u, z(w,x)) m(v,w) dv dw \\
&\approx&
	\Big[\frac{1}{N} g(u|\phi(\bm \theta),v_1) \frac{\partial}{\partial x} Q(u, z(w_1,x)) \Big] m(v_1,w_1) \\
	&~& + 
	 \sum_{j=2}^{N-1}  \frac{1}{2N}\Big[ g(u|\phi(\bm \theta),v_j) \frac{\partial}{\partial x} Q(u, z(w_j,x)) +g(u|\phi(\bm \theta),v_j)  \frac{\partial}{\partial x} Q( u,z(w_{j+1},x)) \Big] m(v_j,w_{j})
\end{eqnarray*}
and
\begin{eqnarray*}
	\frac{\partial}{\partial x} Q( u,z(u,x)) &\approx&  \frac{Q( u,z(u,x + \Delta)) - Q( u,z(u,x - \Delta))}{2 \Delta},
\end{eqnarray*}
for $x=x_1, \ldots, x_i, \ldots, x_{N-1},$ where $\Delta = 1 / (2N).$

Now we apply $(u,x)=(u_1,x_1), \ldots, (u_{N-1},x_{N-1})$ to 
equation~(\ref{M00}) to obtain $2 (N-1)$ equations, and use the above approximations to
discretize the right-hand side of equation~(\ref{M00}). Write these $2 (N-1)$
equations in a matrix form to yield ${\bf w} = {\bf M} {\bf w},$ where ${\bf w}
= ( m_0(u_1,x_1), \cdots, m_0(u_{N-1},x_{N-1}), m_1(u_1,x_1), \cdots,\\
m_1(u_{N-1},x_{N-1}))^{t}$ and ${\bf M}$ is a $2(N-1) \times 2(N-1)$ matrix
whose entries values depend on $Q_{0}(u_i,z(w_{j}, x_{i} \pm \Delta))$ and
$Q_{1}(u_i,z(w_{j}, x_{i} \pm \Delta)),$ both of which can be computed for
given values of $(u_i,x_{i})$ and $w_{j}.$ Since $Q_{i}(u,z(w,1)) = 1$ and
$Q_{i}(u,z(w,0)) = 0$ for all $w$ and $i=0,1,$ a nice property of the matrix
${\bf M}$ is that the sum of each column equals $1,$ which is the main reason
why we use the  invariant joint density notation. This property ensures that
the matrix ${\bf M}$ has an eigenvalue (the largest) equal to $1$ and thus the
corresponding eigenvector {\bf w} is an efficient approximation to
$m_0(\cdot,\cdot)$ and $m_1(\cdot,\cdot).$

With the discretized approximation $\hat m_0(\cdot,\cdot)$ and $\hat
m_1(\cdot,\cdot),$ we can estimate $J_{\theta}$ by approximating the
integration in (\ref{Jalpha}), thereby yielding an alternative way to compute
the R\'{e}nyi divergence.
Now we show this alternative way is valuable, as the only method so far in the
literature to estimate the R\'{e}nyi divergence in general HMM  is the Monte
Carlo simulation of $\frac{1}{n} \log S_{n}$ for large value of $n.$ It is
expected that our proposed alternative non-Monte-Carlo method will allow one to
check the accuracy and correctness of both methods.



\begin{table}
	\setlength{\tabcolsep}{4pt}
	\renewcommand{\arraystretch}{0.8}
	\centering
	\caption{\label{tb:examples}Parameters for the eight examples.}
		\vspace{-0.5cm}
	\begin{tabular}{lll}
		\toprule 
    (1) & $\bm\theta = (0.41, 0.6, (1,0), 0, 1, 0, 2)$ & $\bm\theta_1 = (0.41, 0.6, (2,1), 0, 1, 0, 1.5)$\\\cmidrule(lr){1-3}

	(2) & $\bm\theta = (0.41, 0.59, (1,0), 0, 1, 0, 2)$ & $\bm\theta_1 = (0.41, 0.59, (2,1), 0, 1, 0, 1.6)$\\\cmidrule(lr){1-3}
	
    (3) & $\bm\theta = (0.4, 0.59, (1,0), 0, 1, 0, 1)$ & $\bm\theta_1 = (0.4, 0.59, [2,1], 0, 1, 0, 0.9)$\\\cmidrule(lr){1-3}

    (4) & $\bm\theta = (0.4, 0.599, (1,0), 0, 1, 0, 1)$ & $\bm\theta_1 = (0.4, 0.599, (2,1), 0, 1, 0, 0.9)$\\\cmidrule(lr){1-3}

    (5) & $\bm\theta = (0.59, 0.4, (1,0), 0, 1, 0, 1)$ & $\bm\theta_1 = (0.59, 0.4, (2,1), 0, 1, 0, 0.9)$\\\cmidrule(lr){1-3}

	(6) & $\bm\theta = (0.599, 0.4, (1,0), 0.2, 1, 0, 1)$ & $\bm\theta_1 = (0.599, 0.4, (2,1), 0.3, 1, 0, 1.1)$\\\cmidrule(lr){1-3}

    (7) & $\bm\theta = (0.4, 0.59, (1,0), 0.2, 1, 0.2, 1.1)$ & $\bm\theta_1 = (0.4, 0.59, (2,1), 0.1, 1, 0.1, 1)$\\\cmidrule(lr){1-3}

    (8) & $\bm\theta = (0.4, 0.59, (1,1), 0, 1, 0, 1)$ & $\bm\theta_1 = (0.4, 0.59, (2,2), 0, 1, 0, 0.9)$\\
    \bottomrule
	\end{tabular}
\end{table}

We focus on the following 8 cases listed in Table~\ref{tb:examples}, and
Table~\ref{T1} presents the R\'{e}nyi divergence for various $\alpha=0.5,0.8,
0.99, 0.999, 1.001, 1.01, 1.5, 2$, and  the Kullback--Leibler divergence
($\alpha\rightarrow 1$) for the above 8 cases with two methods: simulation and
numerical approximation. For simulation, we use the sample size as 2000
and the replication number as 100; for numerical approximation, we use
the discretized lattice number as 16 and the lower and upper integral bounds
as 15.

Next, we also report the value of the R\'{e}nyi divergence  and
Kullback--Leibler divergence in this context. We consider two ways to estimate
the R\'{e}nyi divergence  and Kullback--Leibler divergence. One is based on
Monte Carlo simulations with the time step for convenience. The other
is based on the invariant probability measure based on 500 discretizations over
the interval [0,1]. The corresponding results are summarized in Table~\ref{T1}, in which
we report R\'{e}nyi divergence based on numerical and Monte Carlo methods. The number
in the parentheses indicates the standard deviation of the Monte Carlo estimator. We also
present relative errors between these two methods and their corresponding running time in 
seconds.

\begin{table}
	\setlength{\tabcolsep}{4pt}
	\renewcommand{\arraystretch}{0.8}
	\centering
	\caption{\label{T1}Results for cases with various $\alpha$}
		\vspace{-0.5cm}
	\begin{tabular}{llccccccccc}
		\toprule 
		& & \multicolumn{8}{c}{Cases}& Average\\\cmidrule(lr){3-10}
		$\alpha$ & &1 & 2 & 3 & 4 & 5 & 6 & 7 & 8 &  time (sec) \\ \midrule
		\multirow{4}{*}{0.5} & Numerical & 0.1091 & 0.0921 & 0.2196 & 0.2211 & 0.2225 & 0.2723 & 0.1227 & 0.2818 & 11.60 \\ \cdashline{2-11} 
		& \multirow{2}{*}{Simulation} & 0.1097 & 0.0927 & 0.2211 & 0.2220 & 0.2239 & 0.2733 & 0.1293 & 0.2826 & 239.45 \\
		&  & (0.0145) & (0.0133) & (0.0214) & (0.0214) & (0.0217) & (0.026) & (0.015) & (0.0247) &  \\\cdashline{2-11} 
		& R.E. (\%) & -0.5469 & -0.6472 & -0.6784 & -0.4054 & -0.6253 & -0.3659 & -5.1044 & -0.2831 &  \\ \midrule
		\multirow{4}{*}{0.8} & Numerical & 0.1533 & 0.1324 & 0.3372 & 0.3387 & 0.3366 & 0.4566 & 0.1939 & 0.4243 & 11.68 \\ \cdashline{2-11} 
		& \multirow{2}{*}{Simulation} & 0.1538 & 0.1329 & 0.3382 & 0.3395 & 0.3374 & 0.4575 & 0.1979 & 0.4250 & 240.43 \\ 
		& & (0.0114) & (0.0109) & (0.0191) & (0.0192) & (0.0189) & (0.0275) & (0.0131) & (0.0212) &  \\ \cdashline{2-11} 
		& R.E. (\%) & -0.3251 & -0.3762 & -0.2957 & -0.2356 & -0.2371 & -0.1967 & -2.0212 & -0.1647 &  \\ \midrule
		\multirow{4}{*}{0.99} & Numerical & 0.1762 & 0.1541 & 0.4072 & 0.4087 & 0.4032 & 0.5850 & 0.2363 & 0.5062 & 12.39 \\ \cdashline{2-11} 
		& \multirow{2}{*}{Simulation} & 0.1767 & 0.1546 &  0.4079 & 0.4094 & 0.4036 & 0.5857 & 0.2388 & 0.5068 & 238.79 \\
		&&  (0.0101) & (0.0099) & (0.0184) & (0.0185) & (0.018) & (0.0295) & (0.0124) & (0.0199) &  \\ \cdashline{2-11} 
		& R.E. (\%) & -0.2830 & -0.3234 & -0.1716 & -0.1710 & -0.0991 & -0.1195 & -1.0469 & -0.1184 &  \\ \midrule
		\multirow{4}{*}{0.999} & Numerical & 0.1772 & 0.1550 &  0.4104 & 0.4120 & 0.4063 & 0.5913 & 0.2382 & 0.5099 & 12.31 \\ \cdashline{2-11} 
		& \multirow{2}{*}{Simulation} & 0.1777 & 0.1555 & 0.4111 & 0.4127 & 0.4067 & 0.5921 & 0.2407 & 0.5105 & 239.49 \\ 
		&& (0.0101) & (0.0099) & (0.0184) & (0.0184) & (0.018) & (0.0296) & (0.0123) & (0.0198) &  \\ \cdashline{2-11} 
		& R.E. (\%) & -0.2814 & -0.3215 & -0.1703 & -0.1696 & -0.0984 & -0.1351 & -1.0386 & -0.1175 &  \\ \midrule
		\multirow{4}{*}{\shortstack[l]{KL\\($\alpha\rightarrow 1$)}} & Numerical & 0.1773 & 0.1552 & 0.4108 & 0.4123 & 0.4066 & 0.5920 & 0.2386 & 0.5104 & 13.34 \\ \cdashline{2-11} 
		& \multirow{2}{*}{Simulation} & 0.1780 & 0.1558 & 0.4114 & 0.4129 & 0.4070 & 0.5928 & 0.2407 & 0.5106 & 235.01 \\ 
		&& (0.0101) & (0.0099) & (0.0184) &  (0.0184) & (0.0179) & (0.0296) & (0.0123) & (0.0198) &  \\ \cdashline{2-11} 
		& R.E. (\%) & -0.3933 & -0.3851 & -0.1458 & -0.1453 & -0.0983 & -0.1350 & -0.8725 & -0.0392 &  \\ \midrule
		\multirow{4}{*}{1.001} & Numerical & 0.1774 & 0.1553 & 0.4112 & 0.4127 & 0.4069 & 0.5927 & 0.2387 & 0.5108 & 12.63 \\ \cdashline{2-11} 
		& \multirow{2}{*}{Simulation} & 0.1779 & 0.1557 & 0.4118 & 0.4134 & 0.4073 & 0.5935 & 0.2411 & 0.5113 & 238.03 \\
		&& (0.0101) & (0.0099) & (0.0184) & (0.0184) & (0.0179) & (0.0296) & (0.0123) & (0.0198) &  \\ \cdashline{2-11} 
		& R.E. (\%) & -0.2811 & -0.2569 & -0.1457 & -0.1693 & -0.0982 & -0.1348 & -0.9954 & -0.0978 &  \\ \midrule
		\multirow{4}{*}{1.01} & Numerical & 0.1784 & 0.1562 & 0.4144 & 0.4159 & 0.4100 & 0.5991 & 0.2406 & 0.5145 & 13.39 \\ \cdashline{2-11} 
		& \multirow{2}{*}{Simulation} & 0.1789 & 0.1567 & 0.4150 & 0.4166 & 0.4104 & 0.5999 & 0.2430 & 0.5151 & 235.90 \\
		&& (0.01) & (0.0098) & (0.0184) & (0.0184) & (0.0179) & (0.0298) & (0.0123) & (0.0198) &  \\ \cdashline{2-11} 
		& R.E. (\%) & -0.2795 & -0.3191 & -0.1446 & -0.1680 & -0.0975 & -0.1334 & -0.9877 & -0.1165 &  \\ \midrule 
		\multirow{4}{*}{1.5} & Numerical & 0.2248 & 0.2014 & 0.5807 & 0.5823 & 0.5650 & 0.9971 & 0.3418 & 0.6995 & 13.23 \\ \cdashline{2-11} 
		& \multirow{2}{*}{Simulation} & 0.2253 & 0.2019 & 0.5806 & 0.5828 & 0.5645 & 0.9967 & 0.3411 & 0.7000 & 232.73 \\
		& & (0.0081) & (0.0082) & (0.0178) & (0.0178) & (0.0171) & (0.0437) & (0.0113) & (0.0181) &  \\ \cdashline{2-11} 
		& R.E. (\%) & -0.2219 & -0.2476 & 0.0172 & -0.0858 & 0.0886 & 0.0401 & 0.2052 & -0.0714 &  \\ \midrule
		\multirow{4}{*}{2} & Numerical & 0.2601 & 0.2370 & 0.7330 & 0.7345 & 0.7054 & 1.5699 & 0.4364 & 0.8587 & 12.49 \\ \cdashline{2-11} 
		& \multirow{2}{*}{Simulation} & 0.2606 & 0.2374 & 0.7321 & 0.7348 & 0.7041 & 1.5548 & 0.4335 & 0.8590 & 232.84 \\
		& & (0.0069) & (0.0071) & (0.0181) & (0.0181) & (0.0174) & (0.1445) & (0.0114) & (0.0176) &  \\ \cdashline{2-11} 
		& R.E. (\%) & -0.1919 & -0.1685 &  0.1229 & -0.0408 & 0.1846 & 0.9712 & 0.6690 & -0.0349 &  \\ \bottomrule
	\end{tabular}
\end{table}

From Table~\ref{T1}, these two ways yield similar numerical results, and thus
the two different methods validate each other.
In particular, we feel confident that the sample size and the
replication number for the above examples are large enough in the Monte Carlo
simulation  to estimate the R\'{e}nyi divergence and Kullback--Leibler
divergence.
Note that the R\'{e}nyi divergence increases as $\alpha$ increases, as that
in the i.i.d.\ case. 
Furthermore, the R\'{e}nyi divergence and Kullback--Leibler divergence 
get closer 
when $\alpha \to 1$.
We also observe that our numerical method is stable for the Kullback--Leibler
divergence, whereas it is sensitive to the underlying parameteter change for the
R\'{e}nyi divergence. This may be due to the use of the approximted R\'{e}nyi
divergence.

\section{Conclusion}\label{sec7}
\def\theequation{6.\arabic{equation}}
\setcounter{equation}{0}

In this paper, we study the R\'{e}nyi divergence  for a general HMM, to cover the
Markov switching model and RNN, including the classical HMM as a special
case.
The Kullback--Leibler divergence can be regarded as the limit of $\alpha \to 1$
of the R\'{e}nyi divergence.  Moreover, we  express the Kullback--Leibler
divergence of the general HMM as a top Lyapunov exponent of a well-defined
product of Markovian random matrices. Since the R\'{e}nyi divergence involves
the largest eigenvalue of the associated Markov operator, which is notoriously
difficult to compute, we turn our attention to asymptotic expansions, and
derive an approximated R\'{e}nyi divergence, which can be used for numerical
approximation based on the Fredholm integral equation. 

There are further studies along this line. First, it would be interesting to
approximate the largest eigenvalue $\lambda(\alpha)$ to yield a more accurate
numerical approximation. Second, we will study the case in which  $\{X_n, n
\geq 0\}$ is a general state Markov chain; or a more general HMM to cover
regime switching state space models and regime switching $\mathrm{GARCH}(p,q)$
(stochastic volatility) models. Last, applications of the R\'{e}nyi divergence
and Kullback--Leibler divergence in general HMMs such as model selection,
regularization, and variational inference are also interesting and merit
further investigation.


\bibliographystyle{chicago}
\bibliography{208Aref}~

\section{Appendix}

	\subsection{Theoretical Study of the Invariant Probability in Model~(\ref{example})}
	We consider the  Markov switching model in (\ref{example}),
		$$Y_t=\psi_1 \mu_{X_t}+\psi_2 \mu_{X_{t-1}}+\phi Y_{t-1}+\epsilon_t,$$
		where $\{X_t, t \geq 0\}$ is a Markov chain on a state space ${\cal
		X}=\{0,1\}$,  with transition probability matrix 
		$P_{\bm \theta}=\begin{pmatrix}
		p_{00} & p_{01} \\
		p_{10} & p_{11}
		\end{pmatrix}.$ 
		Here we omit
		${\bm \theta}$ in $p_{ij}$ to simplify the notation. We
		will add ${\bm \theta}_1$ in $p_{ij}$ as $p_{ij}^{{\bm \theta}_1}$ when
		the probability is $P=P_{{\bm \theta}_1}$.
		
		We approach this problem by reformulating model (\ref{example}) as
		a 1-order four-state Markov switch model.
		To do so,  we first create a variable $Z_n$ such that 
		\begin{itemize}
			\item
			$Z_t=0$ if $(X_{t-1},X_{t})=(0,0)$,~~~$Z_t=1$ if $(X_{t-1},X_{t})=(0,1)$,
			\item
			$Z_t=2$ if $(X_{t-1},X_{t})=(1,0)$,~~~$Z_t=3$ if $(X_{t-1},X_{t})=(1,1)$.
		\end{itemize}
		Then $\{Z_t, t \geq 0\}$  is a Markov chain on state space ${\cal
		X}=\{0,1,2,3\}$, with transition probability matrix
		$${P}_{\bm \theta}=
		\begin{pmatrix}
		p_{00} & p_{01} & 0 & 0\\
		0 & 0 & p_{10} & p_{11}\\
			p_{00} & p_{01}& 0 & 0\\
		0 & 0 & p_{10} & p_{11}
		\end{pmatrix}.$$

		Denote the conditional density function of $Y_t$ given $Y_{t-1}$ and
		$(X_{t-1},X_t)=(i,j)$ as $f_{ij,{\bm \theta}}(Y_t|Y_{t-1})$. Then, we can
		represent the joint probability of $Y_1,\dots,Y_n$ as
		\begin{eqnarray}
		p_{\bm \theta}(Y_1,\dots,Y_n)=\|\mathbf{M}_n\dots \mathbf{M}_1\bm{\pi}_{\bm \theta}\|,
		\end{eqnarray}
		where
		$$\bm\pi_{\bm \theta}=
		\begin{pmatrix}
		p_{00} p_{10}/(p_{01}+p_{10})\\
		p_{01} p_{10} /(p_{01}+p_{10})\\
		p_{10} p_{01} /(p_{01}+p_{10})\\
		p_{11} p_{01} /(p_{01}+p_{10})
		\end{pmatrix},
		\quad
		\mathbf M_1=
		\begin{pmatrix}
		f_{00,{\bm \theta}}(Y_1) & 0&0 &0\\
		0& f_{01,{\bm \theta}}(Y_1) & 0& 0\\
		0&0 & f_{10,\bm \theta}(Y_1) &0 \\
		0&0 &0 & f_{11,\bm \theta}(Y_1)
		\end{pmatrix},$$
		and for $t\geq2$,
		$$\mathbf{M}_t=
		\begin{pmatrix}
		p_{00}f_{00,\bm \theta}(Y_t|Y_{t-1}) & 0 & p_{00}f_{00,\bm \theta}(Y_t|Y_{t-1}) & 0\\
		p_{01} f_{01,\bm \theta}(Y_t|Y_{t-1}) &  0 & p_{01} f_{01,\bm \theta}(Y_t|Y_{t-1}) &  0\\
		0 & p_{10}  f_{10,\bm \theta}(Y_t|Y_{t-1}) & 0 & p_{10}  f_{10,\bm \theta}(Y_t|Y_{t-1})\\
		0 & p_{11}f_{11,\bm \theta}(Y_t|Y_{t-1}) & 0 & p_{11} f_{11,\bm \theta}(Y_t|Y_{t-1})\\
		\end{pmatrix}.$$
		
Let
	\begin{align*}
	A_{t,\bm \theta}&=p_{00}(A_{t-1,\bm \theta}+C_{t-1,\bm \theta})f_{00,\bm \theta}(Y_t|Y_{t-1}),
	~B_{t,\bm \theta}=p_{01}(A_{t-1,\bm \theta}+C_{t-1,\bm \theta})f_{01,\bm \theta}(Y_t|Y_{t-1}),\\
	C_{t,\bm \theta}&=p_{10} (B_{t-1,\bm \theta}+D_{t-1,\bm \theta})f_{10,\bm \theta}(Y_t|Y_{t-1}),
	~D_{t,\bm \theta}=p_{11} (B_{t-1,\bm \theta}+D_{t-1,\bm \theta})f_{11,\bm \theta}(Y_t|Y_{t-1}),
	\end{align*}
	with initial value $(A_{1,\bm \theta},B_{1,\bm \theta},C_{1,\bm
	\theta},D_{1,\bm \theta})=\mathbf M_1\bm\pi_{\bm \theta}$. Then we can
	calculate $p_{\bm \theta}(Y_1,\dots,Y_n)$ recursively by
		$$p_{\bm \theta}(Y_1,\dots,Y_n)=A_{n,\bm \theta}+B_{n,\bm \theta}+C_{n,\bm \theta}+D_{n,\bm \theta}.$$

		The log-joint probability of $Y_1,\dots,Y_n$ can be computed as
		$$\log p_{\bm \theta}(Y_1,\dots,Y_n)=\sum_{t=1}^n\log(A_{t, \bm \theta}^*+B_{t,\bm \theta}^*+C_{t,\bm \theta}^*+D_{t,\bm \theta}^*),$$
		where
		\begin{align*}
		A_{t,\bm \theta}^*&=\frac{p_{00}(A_{t-1,\bm \theta}^*+C_{t-1,\bm \theta}^*)}{A_{t-1,\bm \theta}^*+B_{t-1,\bm \theta}^*+C_{t-1,\bm \theta}^*+D_{t-1,\bm \theta}^*}f_{00,\bm \theta}(Y_t|Y_{t-1}),\\
		B_{t,\bm \theta}^*&=\frac{p_{01}(A_{t-1,\bm \theta}^*+C_{t-1,\bm \theta}^*)}{A_{t-1,\bm \theta}^*+B_{t-1,\bm \theta}^*+C_{t-1,\bm \theta}^*+D_{t-1,\bm \theta}^*}f_{01,\bm \theta}(Y_t|Y_{t-1}),\\
		C_{t,\bm \theta}^*&=\frac{p_{10} (B_{t-1,\bm \theta}^*+D_{t-1,\bm \theta}^*)}{A_{t-1,\bm \theta}^*+B_{t-1,\bm \theta}^*+C_{t-1,\bm \theta}^*+D_{t-1,\bm \theta}^*}f_{10,\bm \theta}(Y_t|Y_{t-1}),\\
		D_{t,\bm \theta}^*&=\frac{p_{11} (B_{t-1,\bm \theta}^*+D_{t-1,\bm \theta}^*)}{A_{t-1,\bm \theta}^*+B_{t-1,\bm \theta}^*+C_{t-1,\bm \theta}^*+D_{t-1,\bm \theta}^*}f_{11,\bm \theta}(Y_t|Y_{t-1}),
		\end{align*}
		with initial value $(A_{1,\bm \theta}^*,B_{1,\bm \theta}^*,C_{1,\bm
		\theta}^*,D_{1,\bm \theta}^*)=\mathbf M_1\bm\pi_{\bm \theta}$.

	Then the Kullback--Leibler divergence and R\'{e}nyi divergence can be
	computed as follows. Here we compute only the Kullback--Leibler divergence;
	the computation of the R\'{e}nyi divergence can be done as that in
	(\ref{Gij}). To start with, the Kullback--Leibler divergence can be computed
	by $J_{\bm \theta_1}-J_{\bm \theta}$, where
		$J_{\bm \theta}=\mathbb{E}_{\Pi}\Big[ \log(A_{t,\bm \theta}^*+B_{t,\bm \theta}^*+C_{t,\bm \theta}^*+D_{t,\bm \theta}^*) \Big].$
		
		Define $W_t=(A^*_{t,\bm \theta}+C^*_{t,\bm \theta})/(A^*_{t,\bm
		\theta}+B^*_{t,\bm \theta}+C^*_{t,\bm \theta}+D^*_{t,\bm \theta})$, and
		the stationary density function satisfying
		$$\mathbb{P}(X_{t-1}=j,X_t=k,Y_{t-1}=u,W_t\leq x)=\int_0^x m_{jk}(u,w)dw.$$
		Then we can express $J_{\bm \theta}$ as
		\begin{align*}
		J_{\bm \theta} 
		&=\mathbb{E}_{\Pi}\Big[\log(A_{t,\bm \theta}^*+B_{t,\bm \theta}^*+C_{t,\bm \theta}^*+D_{t,\bm \theta}^*)\Big]\\
		&=\mathbb{E}_{\Pi}\Big[\mathbb{E}_{\Pi}\Big(\log(A_{t,\bm \theta}^*+B_{t,\bm \theta}^*+C_{t,\bm \theta}^*+D_{t,\bm \theta}^*)\Big|X_{t-2},X_{t-1},Y_{t-2},W_{t-1}\Big) \Big]\\
		&=\sum_{i=0}^1\sum_{j=0}^1\sum_{k=0}^1\int_0^1\int_{-\infty}^\infty\mathbb{P}_{\Pi}(X_t=k|X_{t-1}=j)\\
		&\cdot \mathbb{E}_{\Pi}\Big(\log(A_{t,\bm \theta}^*+B_{t,\bm \theta}^*+C_{t,\bm \theta}^*+D_{t,\bm \theta}^*)\Big|X_{t-2}=i,X_{t-1}=j,X_t=k,Y_{t-2}=v,W_{t-1}=w\Big) m_{ij}(v,w)dvdw\\
		&=\int_0^1\int_{-\infty}^\infty\Big[p_{00}^{\bm \theta_1} G_{000}(v,w)+p_{01}^{\bm \theta_1}G_{001}(v,w)\Big]m_{00}(v,w)dvdw \\
		&\qquad+\int_0^1\int_{-\infty}^\infty\Big[p_{10}^{\bm \theta_1}G_{010}(v,w)+p_{11}^{\bm \theta_1}G_{011}(v,w)\Big]m_{01}(v,w)dvdw\\
		&\qquad+\int_0^1\int_{-\infty}^\infty\Big[p_{00}^{\bm \theta_1}G_{100}(v,w)+p_{01}^{\bm \theta_1}G_{101}(v,w)\Big]m_{10}(v,w)dvdw\\
		&\qquad+\int_0^1\int_{-\infty}^\infty\Big[p_{10}^{\bm \theta_1}G_{110}(v,w)+p_{11}^{\bm \theta_1}G_{111}(v,w)\Big]m_{11}(v,w)dvdw,
		\end{align*}
		where
		\begin{align*}
		G_{ijk}(v,w)&=\mathbb{E}_{\Pi}\Big(\log(A_{t,\bm \theta}^*+B_{t,\bm \theta}^*+C_{t,\bm \theta}^*+D_{t,\bm \theta}^*)\Big|X_{t-2}=i,X_{t-1}=j,X_t=k,Y_{t-2}=v,W_{t-1}=w\Big)\\
		&=\int_{-\infty}^\infty\int_{-\infty}^\infty\log\Big[ p_{00}wf_{00,\bm \theta}(y|u)+ p_{01}wf_{01,\bm \theta}(y|u)\\
		&\qquad\qquad\qquad+ p_{10}(1-w)f_{10,\bm \theta}(y|u)+ p_{11}(1-w)f_{11,\bm \theta}(y|u)\Big] f_{ij,\bm \theta_1}(u|v) f_{jk,\bm \theta_1}(y|u)du dy.
		\end{align*}
		
		It remains to approximate $\hat m_{ij}(\cdot)$. Note that
		\begin{align*}
		&\mathbb{P}_{\Pi}(X_{t-1}=j,X_t=k,Y_{t-1}=u,W_t\leq x)\\
		&=\sum_{i=0}^1\sum_{l=0}^1\int_0^1\int_{-\infty}^\infty \mathbb{P}_{\Pi}(X_{t-1}=j,X_t=k,Y_{t-1}=u,W_t\leq x|X_{t-2}=i,X_{t-1}=l,Y_{t-2}=v,W_{t-1}=w)\\
		&\qquad\cdot m_{il}(v,w)dvdw\\
		&=\sum_{i=0}^1\int_0^1\int_{-\infty}^\infty\mathbb{P}_{\Pi}(X_{t-1}=j,X_t=k,Y_{t-1}=u,W_t\leq x|X_{t-2}=i,X_{t-1}=j,Y_{t-2}=v,W_{t-1}=w)\\
		&\qquad\cdot m_{ij}(v,w)dvdw  \\
		&=\sum_{i=0}^1\int_0^1\int_{-\infty}^\infty\mathbb{P}_{\Pi}(X_t=k|X_{t-1}=j) f_{ij,\theta_1}(u|v) \\
		&\qquad\cdot\mathbb{P}_{\Pi}(W_t\leq x|X_{t-2}=i,X_{t-1}=j,X_t=k,Y_{t-1}=u,W_{t-1}=w)m_{ij}(v,w)dvdw\\
		&=\sum_{i=0}^1\int_0^1\int_{-\infty}^\infty \mathbb{P}_{\Pi}(X_t=k|X_{t-1}=j)f_{ij,\theta_1}(u|v) Q_{jk}(x;u,w)m_{ij}(v,w)dvdw,
		\end{align*}
		where
		\begin{align*}
		Q_{jk}(x;u,w)
		&=\mathbb{P}_{\Pi}(W_t\leq x|X_{t-2}=i,X_{t-1}=j,X_t=k,Y_{t-1}=u,W_{t-1}=w)\\
		&=\mathbb{P}_{\Pi}\Bigg(\frac{A_{t,\bm \theta}^*+C_{t,\bm \theta}^*}{A_{t,\bm \theta}^*+B_{t,\bm \theta}^*+C_{t,\bm \theta}^*+D_{t,\bm \theta}^*}\leq x\Bigg|X_{t-2}=i,X_{t-1}=j,X_t=k,Y_{t-1}=u,W_{t-1}=w\Bigg)\\
		&=\mathbb{P}_{\Pi}\Bigg((1-x) p_{00}wf_{00,\bm \theta}(Y_t|u)-x p_{01} w f_{01,\bm \theta}(Y_t|u)+(1-x) p_{10} (1-w)f_{10,\bm \theta}(Y_t|u)\\
		&\qquad\qquad-x p_{11}(1-w)f_{11,\bm \theta}(Y_t|u)\leq0\Bigg|X_{t-1}=j,X_t=k,Y_{t-1}=u\Bigg).
		\end{align*}
		
		It follows that
		\begin{align*}
		m_{00}(u,x)&=\int_0^1\int_{-\infty}^\infty p_{00}^{\bm \theta_1} f_{00,\bm \theta_1}(u|v)\frac{\partial}{\partial x}Q_{00}(x;u,w)m_{00}(v,w)dvdw\\
		&\qquad+\int_0^1\int_{-\infty}^\infty p_{00}^{\bm \theta_1}f_{10,\bm \theta_1}(u|v)\frac{\partial}{\partial x}Q_{00}(x;u,w)m_{10}(v,w)dvdw\\
		m_{01}(u,x)&=\int_0^1\int_{-\infty}^\infty p_{01}^{\bm \theta_1} f_{00,\bm \theta_1}(u|v)\frac{\partial}{\partial x}Q_{01}(x;u,w)m_{00}(v,w)dvdw\\
		&\qquad+\int_0^1\int_{-\infty}^\infty p_{00}^{\bm \theta_1} f_{10,\bm \theta_1}(u|v)\frac{\partial}{\partial x}Q_{01}(x;u,w)m_{10}(v,w)dvdw\\
		m_{10}(u,x)&=\int_0^1\int_{-\infty}^\infty p_{10}^{\bm \theta_1} f_{01,\bm \theta_1}(u|v)\frac{\partial}{\partial x}Q_{10}(x;u,w)m_{01}(v,w)dvdw\\
		&\qquad+\int_0^1\int_{-\infty}^\infty p_{10}^{\bm \theta_1}f_{11,\bm \theta_1}(u|v)\frac{\partial}{\partial x}Q_{10}(x;u,w)m_{11}(v,w)dvdw\\
		m_{11}(u,x)&=\int_0^1\int_{-\infty}^\infty p_{11}^{\bm \theta_1} f_{01,\bm \theta_1}(u|v)\frac{\partial}{\partial x}Q_{11}(x;u,w)m_{01}(v,w)dvdw\\
		&\qquad+\int_0^1\int_{-\infty}^\infty p_{11}^{\bm \theta_1} f_{11,\bm \theta_1}(u|v)\frac{\partial}{\partial x}Q_{11}(x;u,w)m_{11}(v,w)dvdw.
		\end{align*}

		Note that in the model we consider in (\ref{example}), the error term
		$\epsilon_t$ is a normal distribution with probability density function
		\begin{align*}
		f_{ij,\bm \theta}(Y_t|u)&=\frac{1}{\sqrt{\sigma^22\pi}}\exp\Bigg[-\frac{[Y_t-(\psi_2\mu_i+\psi_1\mu_j+\phi u)]^2}{2\sigma^2}\Bigg]\\
		&=\frac{1}{\sqrt{\sigma^22\pi}}e^{-Y_t^2/(2\sigma^2)}\exp\Bigg[\frac{2Y_t(\psi_2\mu_i+\psi_1\mu_j+\phi u)-(\psi_2\mu_i+\psi_1\mu_j+\phi u)^2}{2\sigma^2}\Bigg].
		\end{align*}
		It follows that
		\begin{align*}
		&Q_{jk}(x;u,w)
		=\mathbb{P}_{\Pi}\Bigg((1-x) p_{00}w\exp\Bigg[\frac{2Y_t(\psi_2\mu_0+\psi_1\mu_0+\phi u)-(\psi_2\mu_0+\psi_1\mu_0+\phi u)^2}{2\sigma^2}\Bigg]\\
		&\quad-x p_{01} w \exp\Bigg[\frac{2Y_t(\psi_2\mu_0+\psi_1\mu_1+\phi u)-(\psi_2\mu_0+\psi_1\mu_1+\phi u)^2}{2\sigma^2}\Bigg]\\
		&\quad+(1-x) p_{10} (1-w)\exp\Bigg[\frac{2Y_t(\psi_2\mu_1+\psi_1\mu_0+\phi u)-(\psi_2\mu_1+\psi_1\mu_0+\phi u)^2}{2\sigma^2}\Bigg]\\
		&\quad-x p_{11}(1-w)\exp\Bigg[\frac{2Y_t(\psi_2\mu_1+\psi_1\mu_1+\phi u)-(\psi_2\mu_1+\psi_1\mu_1+\phi u)^2}{2\sigma^2}\Bigg]\\
		&\quad\leq0\Bigg|X_{t-1}=j,X_t=k,Y_{t-1}=u\Bigg).
		\end{align*}

\end{document}